\begin{document}

%

\newtheorem{thm}{Theorem}
\newtheorem{theorem}{Theorem}
\newtheorem{cor}[thm]{Corollary}
\newtheorem{lemma}[thm]{Lemma}
\newtheorem{conjecture}[thm]{Conjecture}
\newtheorem{prop}[thm]{Proposition}
\newtheorem{claim}[thm]{Claim}
\newtheorem{fact}[thm]{Fact}
\newcommand{\assign}{\leftarrow}
\newcommand{\be}{\begin{equation}}
\newcommand{\ee}{\end{equation}}
\newcommand{\bea}{\begin{eqnarray}}
\newcommand{\eea}{\end{eqnarray}}
\newcommand{\Mmax}{M_{\textrm{max}}}
\newcommand{\eol}{\end{enumerate}\setlength{\itemsep}{-\parsep}}
\newcommand{\eg}{\emph{e.g.}\xspace}
\newcommand{\ie}{\emph{i.e.}\xspace}
\newcommand{\etal}{\emph{et al.}\xspace}

\let\eps\varepsilon\relax
\def\up#1{^{(#1)}}
\def\down#1{_{(#1)}}
\let\hat\widehat
\def\Rl{{\mathbb R}}
\def\Nl{{\mathbb N}}
\def\Ir{{\mathbb Z}}
\def\E{{\mathbb E}}
\def\Pr{{\mathbb P}}
\def\cP{{\cal P}}
\def\cF{{\cal F}}
\def\cH{{\cal H}}
\def\cK{{\cal K}}
\def\ol#1{{\overline{#1}}}
\def\var{\mathop{\mathrm{Var}}}
\def\cov{\mathop{\mathrm{Cov}}}
\def\nf{\hat n_{{\textsc f}}}
\def\kf{\hat K_{{\textsc f}}}
\def\kfone{\hat K_{{\textsc i}}}
\def\nfp{\hat n_{\textsc{fp}}}

\title{Graph Sample and Hold: A Framework for Big-Graph Analytics}

\numberofauthors{4}
\author{
\alignauthor
Nesreen K. Ahmed\\
       \affaddr{Purdue University}\\
       \email{nkahmed@cs.purdue.edu}
\and
Nick Duffield\\
       \affaddr{Rutgers University}\\
       \email{nick.duffield@rutgers.edu}
\and
Jennifer Neville\\
       \affaddr{Purdue University}\\
       \email{neville@cs.purdue.edu}
\and
Ramana Kompella\\
       \affaddr{Purdue University} \\
       \email{kompella@cs.purdue.edu}
}

\maketitle




\begin{abstract}
Sampling is a standard approach in big-graph analytics; the goal is to efficiently estimate the graph properties by consulting a sample of the whole population.  A perfect sample is assumed to mirror every property of the whole population. Unfortunately, such a perfect sample is hard to collect in complex populations such as graphs (e.g. web graphs, social networks etc), where an underlying network connects the units of the population. Therefore, a good sample will be representative in the sense that graph properties of interest can be estimated with a known degree of accuracy.

While previous work focused particularly on sampling schemes used to estimate certain graph properties (e.g. triangle count), much less is known for the case when we need to estimate various graph properties with the same sampling scheme. In this paper, we propose a generic stream sampling framework for big-graph analytics, called Graph Sample and Hold (gSH).  To begin, the proposed framework samples from massive graphs sequentially in a single pass, one edge at a time, while maintaining a small state. We then show how to produce unbiased estimators for various graph properties from the sample. Given that the graph analysis algorithms will run on a sample instead of the whole population, the runtime complexity of these algorithm is kept under control.  Moreover, given that the estimators of graph properties are unbiased, the approximation error is kept under control. Finally, we show the performance of the proposed framework (gSH) on various types of graphs, such as social graphs, among others.
\end{abstract}


\section{Introduction}
\label{sec-intro}

\subsection{Motivation}
We live in a vastly connected world. A large percentage of world's population
routinely use online applications (e.g., Facebook and instant messaging) that allow them to interact with their friends, family, colleagues and
anybody else that they wish to. Analyzing various properties of these
interconnection networks is a key aspect in managing these applications; for
example, uncovering interesting dynamics often prove crucial for either enabling
new services or making existing ones better. Since these interconnection
networks are often modeled as graphs, and these networks are huge in
practice (e.g., Facebook has more than a billion nodes), efficient {\em
big-graph analytics} has recently become extremely important.

One key stumbling block for enabling big graph analytics is the limitation in
computational resources. Despite advances in distributed and parallel processing
frameworks such as MapReduce for graph analytics and the appearance of infinite
resources in the cloud, running brute-force graph analytics is either too
costly, too slow, or too inefficient in many practical situations. Further,
finding an `approximate' answer is usually sufficient for many types
of analyses; the extra cost and time in finding the exact answer is often not
worth the extra accuracy. {\em Sampling} therefore provides an
attractive approach to quickly and efficiently finding an approximate
answer to a query, or more generally, any analysis objective. 

Many interesting graphs in the online world naturally evolve over time, as new
nodes join or new edges are added to the network. A natural representation of
such graphs is in the form of a stream of edges, as some prior work noted
\cite{ahmed2013network}.  Clearly, in such a streaming graph model, sampling algorithms
that process the data in one-pass are more efficient than those that process
data in an arbitrary order.  Even for static graphs, the streaming model is
still applicable, with a one-pass algorithm for processing arbitrary queries
over this graph typically more efficient than those that involve arbitrary
traversals through the graph. 

\subsection{Sampling, Estimation, Accuracy}
In this paper, we propose a new sampling framework for big-graph
analytics, called Graph Sample and Hold (gSH). gSH essentially
maintains a small amount of state and passes through all edges in the
graph in a streaming fashion. The sampling probability of an arriving
edge can in general be a function of the stored state, such as the adjacency
properties of the arriving edge with those already sampled. (This can
be seen as an analog of the manner in which standard Sample and Hold
\cite{EV:02} samples packets with a probability depending on whether
their key matches one already sampled). Since the
algorithm involves processing only a sample of edges (and thus,
nodes), it keeps run time complexity under check.  


gSH provides a generic framework for unbiased estimation of the counts
of arbitrary subgraphs. This uses the
Horvitz-Thompson construction \cite{HT52} in which the count of any
sampled object is weighted by dividing by its sampling probability. In
gSH this is realized by maintaining along with each sampled edge, the
sampling probability that was in force when it was sampled. The counts
of subgraphs of sampled edges are then weighted
according to the product of the selection probabilities of their
constituent edges. Since the edge sampling probabilities are
determined conditionally 
with respect to the prior sampling outcomes, this product
reflects the dependence structure of edge selection.

The sampling framework also provide the means to compute the accuracy
of estimates, since the unbiased estimator of the variance of the
count estimator can be computed from the sampling probabilities of
selected edges alone. More generally, the covariance between the count
estimators of any pair of subgraphs can be estimated in the same manner.

The framework itself is quite generic. By varying the dependence of
sampling probabilities on previous history, one can tune the 
estimation various properties of the original graph efficiently
with arbitrary degrees of accuracy. For example, simple uniform sampling of
edges at random may naturally lead to selecting a large number of higher-degree
nodes since higher-degree nodes appear in more number of edges. For each of
these sampled nodes, we can choose the holding function to simply track the size
of the degree for these specific nodes, of course accounting for the loss of the
count before the node has been sampled in an unbiased manner. Similarly, by
carefully designing the sampling function, we can obtain a uniformly random
sample of nodes (similar to the classic node sampling), for whom we can choose
to hold an accurate count of number of triangles each of these nodes is part of.

\subsection{Applications of the gSH Framework}
In this paper, we demonstrate applications of the gSH framework in two
directions. Firstly, we formulate a parameterized family gSH(p,q) of
gSH sampling schemes, in which an arriving edge with no adjacencies
with previously sampled edges is selected with probability $p$;
otherwise it is sampled with probability $q$. Secondly, we consider
four specific quantities of interest to estimate within the framework.
These are counts of
links, triangles, connected paths of length two, and the derived
global clustering coefficient. We also provide an unbiased estimator
of node counts based on edge sampling. 
Note that we do not claim that these lists of examples are by any
means exhaustive or that the framework can accommodate arbitrary queries
efficiently. 

\subsection{Contributions and Outline}

In Section~\ref{sec-framework}, we describe the general framework for
graph sampling, and show how it can be used to
provide unbiased estimates of the counts of arbitrary selections 
of subgraphs. We also show how unbiased estimates of the variance of
these estimators can be efficiently computed within the same
framework. In Section~\ref{sec-3}, we show how counts of specific types
of subgraph (links, triangles, paths of length 2) and the global
clustering coefficient can be estimated in this framework. In
Section~\ref{sec-4}, we describe the specific gSH(p,q) graph Sample and
Hold algorithms, and illustrate the application of gSH(p,1) on a
simple graph. In Section~\ref{sec-experiments}, we describe a set of
evaluations based on a number of real network topologies. We apply the
estimators described in Section~\ref{sec-4} to the counts described in
Section~\ref{sec-3}, and compare empirical confidence intervals with
those estimated directly from the samples. We also compare accuracy
with prior work. We discuss the general relation of our work to
existing literature in Section~\ref{sec-related} and conclude in Section~\ref{sec-conclusion}.

\subsection{Relation to Sample and Hold}
gSH for big-graph analytics bears some resemblance to the classic
Sample and Hold (SH) approach \cite{EV:02}, versions of which also
appeared as Counting Samples of Gibbons and Matias\cite{GM:sigmod98},
and  were used for attack detection by Smitha, Kim and Reddy
\cite{Reddy2001}. 
In SH, packets carry a key that identifies the flow to which
they belong. A router maintains a cache of information concerning the
flows of packets that traverse it.
If the key of an arriving packet matches a key on
which information is currently maintained in the router, the information
for that key (such as packet and byte counts and timing information)
is updated accordingly. Otherwise the packet is sampled with some
probability $p$. If selected, a new entry is instantiated in the
cache for that key. SH is more likely to sample longer flows.
Thus, SH provides an efficient way to store
information concerning the disposition of packet across the small
proportion of flows that carry a large proportion of all network
packets.

gSH can be viewed as an analog of SH in which the equivalence relation
of packets according to their keys is replaced by adjacency
relation between links. But this generalization brings
many differences as well.
In particular, many graph properties involve transitive properties (e.g.,
triangles) that are relatively uninteresting in network measurements (and hence,
under explored).  For many of these properties, it is important to realize that
the accuracy of the analytics depends on the ordering of edges to some extent,
which was not the case for the vast majority of network measurement problems
considered in the literature.

\section{Framework for Graph Sampling}
\label{sec-framework}

\subsection{Graph Stream Model}
Let $G=(V,K)$ be a graph. 
We call two edges $k,k'\in K'$ are adjacent, $k\sim k'$, if they join at some node. Specifically:
\begin{itemize}
\item \textsl{Directed adjacency:} $k=(k_1,k_2)\sim k'=(k'_1,k'_2)$ iff $k_2=k'_1$ or $k_1=k'_2$. Note that $\sim$ is not symmetric in this case.
\item \textsl{Undirected adjacency:} $k=(k_1,k_2)\sim k'=(k'_1,k'_2)$ iff 
$k\cap k'\ne \emptyset$.  Note that $\sim$ is symmetric in this case.
\end{itemize}
Without loss of generality we assume edges are unique; otherwise 
distinguishing labels that are ignored by $\sim$\ can be appended.

The edges in $K$ arrive in an order $k:[|K|]\to K$. 
For $k,k'\in K$, we write $k\prec k'$
if $k$ appears earlier than $k'$ in arrival order. For $i\le |K|$,
$K_i=\{k\in K: k\preceq k_i\}$ comprises the first $i$ arrivals. 

\subsection{Edge Sampling Model}\label{sec:edge:model}
We describe the sampling of edges through a random
process $\{H_i\}=\{H_i:i\in[|K|]\}$ where $H_i=1$ if $k_i$ is
selected $H_i=0$ otherwise.  Let $\cF_i$ denote the set of possible
outcomes $\{H_1,\ldots,H_i\}$;  We assume that an 
edge is selected according to a probability
that is a function of the sampling outcomes of previous edges. For
example, the selection probability of an edge can be a function of the
(random) number
of previously selected edges that are adjacent to it. Thus we write
\be
\Pr[k_i \mbox{ is selected } | \{H_1,\ldots,H_{i-1}\}] = \E[H_i|\cF_{i-1}]
  =p_i
\ee
where $p_i\in(0,1]$ is \textsl{random} probability that is determined
by the first $i-1$ sampling outcomes\footnote{Formally, $\{\cF_i\}$ is the natural filtration associated
with the process $\{H_i\}$, and $\{p_i\}$ is previsible
w.r.t. $\{\cF_i\}$; see \cite{W91}.}.

\subsection{Subgraph Estimation}

In this paper, we shall principally be concerned with estimating the
frequency of occurrence of certain subsets of $K$ within the sample. Our
principal tool is the \textbf{selection estimator} $\hat S_i =
H_i/p_i$ of the link $k_i$. It is uniquely defined by the properties:
(i) $\hat S_i \ge 0$;  (ii) $\hat S_i>0$ iff $H_i>0$; and (iii)
$\E[\hat S_i|\cF_{i-1}]=1$, which we prove in Theorem~\ref{thm:basic}
below. We recognize $\hat S_i$ as a
Horvitz-Thompson estimator \cite{HT52} of unity, which 
indicating the presence of $k_i$ in $K$.

The idea generalizes to indicators of general subsets of edges with
$K$. We call a subset $J\subset K$ an ordered subset when written in
increasing arrival order $J=(j_{i_1},j_{i_2},\ldots,j_{i_m})$ with
$i_1<i_2<\dots<i_m$. For an ordered subset $J$ of $K$ we write
\be
H(J)=\prod_{j_i\in J}H_i\quad\mbox{and}\quad P(J)=\prod_{j_i\in J}p_i
\ee
with the convention that $H(\emptyset)=P(\emptyset)=1$. We say that $J$ is
selected if $H(J)=1$. The selection estimator for an ordered subset
$J$ of $K$ is
\be
\hat S(J)=\prod_{j_i\in J}\hat S_{j_i}=H(J)/P(J)
\ee

Our main structural result concerns the properties if the $\hat S(J)$.

\begin{theorem}\label{thm:basic}
\begin{itemize}
\item[(i)] $\E[\hat S_i|\cF_{i-1}]=1$ and hence $\E[\hat S_i]=1$.
\item[(ii)] For any ordered subset $J=(j_{i_1},\ldots,j_{i_m})$ of $K$, 
\be
\E[\hat S(j_{i_1},\ldots,j_{i_m})|\cF_{i_{m-1}}]=\hat
S(j_{i_1},\ldots,j_{i_{m-1}})
\ee
and hence
\be
E[\hat S(J)]=1
\ee
\item[(iii)] Let $J,J'$ be two ordered subsets of $K$. 
If $J\cap J'=\emptyset$ then 
\be
\E[\hat S(J)\hat S(J')]=1\, \mbox{and hence}\,
\cov(\hat S(J),\hat S(J'))=0\ee
\item[(iv)] Let $J_1,\ldots,J_\ell$ be disjoint ordered subsets of
  $K$. Let $q$ be a polynomial in $\ell$ variables that is linear in
  each of its arguments. Then $\E[q(\hat S(J_1),\ldots,\hat S(J_\ell))]=q(1,\ldots,1)$.
\item[(v)] Let $J,J'$ be two ordered subsets of $K$ with $J \Delta
  J'$ their symmetric difference. Then $C(J,J')$ defined below is 
non-negative and an unbiased estimator of $\cov(\hat S(J),\hat S(J))$,
which is hence non-negative. $C(J,J')$ is defined to be $0$ when 
$J\cap J'=\emptyset$, and otherwise:
\be
\hat C(J,J')=\hat S(J\cup J')\left(\hat S(J\cap J')-1\right)
\ee
\item[(vi)] $\hat S(J)\left(\hat S(J)-1\right)$ is an unbiased estimator of $\var(\hat S(J))$.
\end{itemize}
\end{theorem}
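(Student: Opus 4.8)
The plan is to derive part (vi) in two complementary ways. The quickest observation is that (vi) is just the diagonal case $J'=J$ of part (v): there $J\cup J'=J$ and $J\cap J'=J$, so $\hat C(J,J)=\hat S(J)(\hat S(J)-1)$ is exactly the proposed estimator, while the target $\cov(\hat S(J),\hat S(J))$ is by definition $\var(\hat S(J))$. Thus both unbiasedness and non-negativity would drop out of (v) once one notes that $J'=J$ is permitted there.

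Rather than lean on the full strength of (v), I would also give a self-contained argument using only the first-moment result (ii). By (ii) we have $\E[\hat S(J)]=1$, so that $\var(\hat S(J))=\E[\hat S(J)^2]-(\E[\hat S(J)])^2=\E[\hat S(J)^2]-1$. Expanding the candidate estimator and using linearity of expectation,
\be
\E\big[\hat S(J)(\hat S(J)-1)\big]=\E[\hat S(J)^2]-\E[\hat S(J)]=\E[\hat S(J)^2]-1,
\ee
which is precisely $\var(\hat S(J))$; hence $\hat S(J)(\hat S(J)-1)$ is unbiased for the variance. As a check consistent with (v), the estimator is in fact pointwise non-negative: since $P(J)\le 1$ forces $\hat S(J)=H(J)/P(J)\in\{0\}\cup[1,\infty)$, the product $\hat S(J)(\hat S(J)-1)$ is never negative.

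The one genuine subtlety --- and the nearest thing to an obstacle --- is integrability: every step presumes $\E[\hat S(J)^2]<\infty$ so that the variance exists. Writing $\hat S(J)^2=H(J)/P(J)^2$ (using $H(J)^2=H(J)$) and conditioning on the last edge $k_{i_m}$ through $\E[H_{i_m}/p_{i_m}^2\mid\cF_{i_m-1}]=1/p_{i_m}$, one sees that the second moment does not telescope to $1$ as in (ii) but instead accumulates inverse-probability factors. Finiteness therefore requires a mild hypothesis --- most cleanly that the selection probabilities are bounded away from $0$ --- which I would record as a standing assumption; with that in place the argument above is immediate.
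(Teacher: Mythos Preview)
Your proposal is correct, and your first observation---that (vi) is simply the diagonal case $J'=J$ of (v)---is exactly the paper's own proof, stated in one line. The additional direct computation from (ii) and the integrability remark are sound but go beyond what the paper does; the paper tacitly assumes finiteness (recall $p_i\in(0,1]$ by hypothesis, though no lower bound is explicitly imposed) and does not comment on the second-moment issue at all.
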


\begin{proof}
(i)  $\E[\hat S_i|\cF_{i-1}]= \E[H_i/p_i|\cF_{i-1}]=1$, since $p_i>0$. 

(ii) is a corollary of (i) since 
\bea
&&\kern -20pt \E[\hat S(j_{i_1},\ldots,j_{i_m})|\cF_{i_{m-1}}]\\
&=&\E\left[\E[\hat
S_{i_m}|\cF_{i_m-1}]\hat S(j_{i_1},\ldots,j_{i_{m-1}})|\cF_{i_{m-1}}\right]\nonumber \\
&=&\hat S(j_{i_1},\ldots,j_{i_{m-1}})
\eea

(iii) When $J\cap J'=\emptyset$, then by (ii)
\be \E[\hat S(J)\hat S(J')]=\E[\hat S(J\cap J')]=1\ee

(iv) Is a direct corollary if (iii)

(v) Unbiasedness: The case $J\cap J'=\emptyset$ follows from (iii). Otherwise,
\bea
\E[\hat C(J,J)]&=&\E[\hat S(J)\hat S(J')]-\E[\hat S(J\cup J')]\\
&=&\E[\hat S(J)\hat S(J')]-1=\cov(\hat S(J),\hat S(J')]
\eea
 since $\E[\hat
S(J)]=\E[\hat S(J')]=1$. Nonnegativity: since each $\hat S(J)$ is
non-negative, $\hat C(J,J')$ is a product of $S(J\Delta J')$, which is
non-negative, with $H(J\cap J')(1/P^2(J\cap J')-1/P(J\cap J'))\ge 0$.

(vi) is a special case of (v) with $J=J'$.
\end{proof}

\section{Subgraph Sum Estimation}\label{sec-3}

We now describe in more detail the process of estimation, and
computing variance estimates. The most general quantity that we
wish to estimate is a weighted sum over collections of subgraphs; for
brevity, we will refer to these as \textbf{subgraph sums}.
 This class includes quantities such as counts of total nodes or links in
$G$, or counts of more complex objects such as connected paths of
length two, or triangles that have been a focus of study in the recent
literature. However, the class is more general quantities in which
a selector is applied to all subgraphs of a given type
(e.g. triangles) and only subgraphs fulfilling a selection criterion
(e.g. based on labels on the nodes of the triangle) are to be included
in the count. 

\subsection{General Estimation and Variance}

To allow for the greatest possible generality, we let $\cK=2^K$ denote
the set of subsets of $K$, and let $f$ be a real function on
$\cK$. For any subset $Q\subset \cK$, the subset sum of $f$ over $Q$
is 
\be
f(Q)=\sum_{J\in Q}f(J)
\ee
Here $Q$ represents the set of subgraphs fulfilling a selection
criterion as described above. Let $\hat Q$ denote the set of
objects in $Q$ that are sampled, i.e., those
$J=(k_{i_1},\ldots,k_{i_m})\in Q$ for which all links are selected.
The following is an obvious consequence
of the linearity of expectation and Theorem~\ref{thm:basic} 
\begin{theorem}
\begin{itemize}
\item[(i)] An unbiased estimator of $f(Q)$ is
\be\hat f(Q) = \sum_{J\in Q}f(J)\hat S(J)=\sum_{J\in\hat Q} f(J)/P(J)\ee 
\item[(ii)] An unbiased estimator of $\var(\hat f(Q))$ is
\be\label{eq:sum:cov}\nonumber
\sum_{J,J'\in\hat Q: J\cap J'\ne\emptyset }f(J)f(J')(1/P(J\cup J'))(1/P(J\cap J') -1)
\ee
\end{itemize}
\end{theorem}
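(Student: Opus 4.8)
The plan is to read both parts straight off Theorem~\ref{thm:basic}, using only linearity of expectation and bilinearity of covariance, and to treat the two forms of each estimator (the sum over $Q$ versus the sum over $\hat Q$) as two descriptions of the same random variable.

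For part (i) I would start from $\hat f(Q)=\sum_{J\in Q}f(J)\hat S(J)$ and take expectations termwise. Linearity gives $\E[\hat f(Q)]=\sum_{J\in Q}f(J)\E[\hat S(J)]$, and Theorem~\ref{thm:basic}(ii) supplies $\E[\hat S(J)]=1$, so $\E[\hat f(Q)]=\sum_{J\in Q}f(J)=f(Q)$. To identify the second expression I would substitute $\hat S(J)=H(J)/P(J)$ and use $H(J)\in\{0,1\}$ with $H(J)=1$ exactly when every link of $J$ is selected, i.e.\ when $J\in\hat Q$; the terms with $J\notin\hat Q$ drop out, leaving $\sum_{J\in\hat Q}f(J)/P(J)$.

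For part (ii) I would first expand the variance by bilinearity of covariance,
\be
\var(\hat f(Q))=\sum_{J,J'\in Q}f(J)f(J')\,\cov(\hat S(J),\hat S(J')).
\ee
Theorem~\ref{thm:basic}(iii) annihilates every term with $J\cap J'=\emptyset$, restricting the sum to pairs with $J\cap J'\ne\emptyset$; on those, Theorem~\ref{thm:basic}(v) gives $\cov(\hat S(J),\hat S(J'))=\E[\hat C(J,J')]$. Pulling the expectation outside then shows that $\sum_{J,J'\in Q:\,J\cap J'\ne\emptyset}f(J)f(J')\hat C(J,J')$ is an unbiased estimator of $\var(\hat f(Q))$. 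It remains to see that this random quantity equals the displayed estimator. Substituting $\hat C(J,J')=\hat S(J\cup J')(\hat S(J\cap J')-1)$ and $\hat S(\cdot)=H(\cdot)/P(\cdot)$, the factor $H(J\cup J')$ forces a term to vanish unless every link of $J\cup J'$ is selected, i.e.\ unless $J,J'\in\hat Q$; and whenever that holds, $J\cap J'\subseteq J\cup J'$ is selected too, so $H(J\cap J')=1$. The surviving terms reduce to $f(J)f(J')(1/P(J\cup J'))(1/P(J\cap J')-1)$ summed over $J,J'\in\hat Q$ with $J\cap J'\ne\emptyset$, which is the claimed expression.

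The computation is essentially routine once Theorem~\ref{thm:basic} is in hand; the only step needing care is the bookkeeping that turns the unrestricted double sum over $Q$ into the observable sum over $\hat Q$. The crux is that the indicator $H(J\cup J')$ simultaneously controls both $\hat S(J\cup J')$ and, through containment, $\hat S(J\cap J')$, so the estimator depends on the data only through the sampled subgraphs and their in-force selection probabilities $P(\cdot)$ — exactly the property that makes the variance estimate computable in a single pass.
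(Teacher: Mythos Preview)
Your proof is correct and follows exactly the approach the paper indicates: it states the theorem as ``an obvious consequence of the linearity of expectation and Theorem~\ref{thm:basic}'' without further detail, and your argument supplies precisely that derivation, invoking parts (ii), (iii), and (v) of Theorem~\ref{thm:basic} in the natural way. If anything, you have written out more than the paper does, including the careful bookkeeping that converts the sum over $Q$ into the observable sum over $\hat Q$.
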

Note that the sum in (\ref{eq:sum:cov}) can formally be left
unrestricted since terms with non-intersecting $J,J'$ are zero due to
our convention that $P(\emptyset)=1$.

\subsection{Edges}
As before $K$ denotes the edges in $G$;let $\hat K$ denote the set of
sampled edges. Then
\be\hat N_K =\sum_{k_i\in \hat K}\frac{1}{p_i}
\ee 
is an unbiased estimate of $N_K=|K|$. An unbiased estimate of the
variance of $\hat N_K$ is 
\be \sum_{k_i\in \hat K} \frac{1}{p_i} \left( \frac{1}{p_i} -1\right)
\ee

\subsection{Triangles}
Let $T$ denote the set of triangle $\tau=(k_1,k_2,k_3)$ in $G$, and
$\hat T$ the set of sampled triangles. Then
\be\hat N_T= \sum_{\tau\in \hat T}1/P(\tau)
\ee is an unbiased estimate of $N_T=|T|$, the number of triangles in $G$. Since
two intersecting triangles have either one link in common or are identical, an unbiased
estimate of $\var(\hat N_T)$ is
\be \sum_{\tau\in \hat T}\frac{1}{P(\tau)}\left(\frac{1}{P(\tau)}-1\right) +
 \sum_{\tau\ne\tau'\in \hat T}\frac{1}{P(\tau\cup\tau)}\left(\frac{1}{P(e(\tau,\tau')}-1\right)\nonumber
\ee
where $e(\tau,\tau')$ is the common edge between $\tau$ and $\tau'$
 
\subsection{Connected Paths of Length 2}
Let $\Lambda$ denote the set of connected paths of length two
$L=(k_1,k_2)$ in $G$, and $\hat\Lambda$ the subset of these that are sampled.
Then
\be\hat N_\Lambda= \sum_{L\in \hat \Lambda}1/P(L)
\ee is an unbiased estimate of $N_\Lambda=\vert \Lambda\vert$, the number of
such paths in $G$. Since
two non-identical members of $\Lambda$ have one edge in common,
an unbiased
estimate of $\var(\hat N_\Lambda)$ is
\be \sum_{L\in \hat \Lambda}\frac{1}{P(L)}\left(\frac{1}{P(L)}-1\right) +
 \sum_{L\ne L'\in \hat \Lambda}\frac{1}{P(L\cup L')}\left(\frac{1}{P(e(L,L')}-1\right)
\nonumber \ee
where $e(L,L')=L\cap L'$ is the common edge between $L$ and $L'$

\subsection{Clustering Coefficient}

The global clustering coefficient of a graph is defined as
$\alpha=3N_T/N_\Lambda$. While $3\hat N_T /\hat N_\Lambda$ is \textsl{an}
estimator of $\alpha$, it is not unbiased. However, the well known delta-method \cite{schervish} suggests using a formal Taylor expansion. But we note that a rigorous application of this method depends on
establishing asymptotic properties of $\hat N_T$ and $\hat N_\Lambda$
for large graphs, the study of which we defer to a subsequent
paper. With this caveat we proceed as follows. For a random vector
$X=(X_1,\ldots,X_n)$ a second order Taylor expansion results in the approximation
\be
\var(f(X_1,\ldots,X_n)) \approx v\cdot Mv
\ee
where $v=(\nabla f)(\E[X])$ and $M$ is the covariance matrix of the
$X_i$. Considering $f(\hat N_T,\hat N_\Lambda)=\hat N_T/\hat N_\Lambda$ we obtain the approximation:
\bea\label{eq:var:cluster}
\var (\hat N_T/\hat N_\Lambda)&\approx& 
\frac{\var(\hat N_T)}{N_\Lambda^2}+\frac{N_T^2 \var(\hat N_\Lambda )}{N_\Lambda^4} \\&&-2\frac{N_T\cov(\hat N_T,\hat N_\Lambda)}{N_\Lambda^3}
\eea
For computation we replace all quantities by their corresponding
unbiased estimators derived previously. Following
Theorem~\ref{thm:basic}, the covariance term is
estimated as
\be
\sum_{\tau\in\hat T,L\in\hat\Lambda \atop \tau\cap L\ne\emptyset}
\frac{1}{P(\tau\cup L)}\left(\frac{1}{P(\tau\cap L)}-1\right)
\ee

\subsection{Nodes}

Node selection is not directly expressed as a subgraph sum, but rather
through a polynomial of the type treated in Theorem~\ref{thm:basic}(iv).
Let $K(x)$ denote the edges containing the node $x\in V$. 
Now observe $x$ remains unsampled if and only if no edge in $K(x)$ is sampled.
This motivates the following estimator of node selection:
\be
\hat n_x=1-\prod_{k_i\in K(x)}(1-\hat S_i)
\ee
The following is a direct consequence of Theorem~\ref{thm:basic}(iv)
\begin{lemma}
$\hat n_x=0$ if and only if no edge from $K(x)$ is sampled, and
$\E[n_x]=1$.
\end{lemma}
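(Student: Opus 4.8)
The plan is to handle the two assertions separately: the expectation falls out of Theorem~\ref{thm:basic} almost immediately, whereas the combinatorial ``if and only if'' is where the genuine care is needed. Throughout, write $K(x)=\{k_{i_1},\ldots,k_{i_\ell}\}$ for the distinct edges incident to $x$, and recall that $\hat S_i=H_i/p_i$ is $0$ when $k_i$ is unsampled and equals $1/p_i\ge 1$ when $k_i$ is sampled, since $p_i\in(0,1]$.

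For the expectation I would expand $\hat n_x$ as a multilinear polynomial in the selection estimators,
\be
\hat n_x=1-\prod_{k_i\in K(x)}(1-\hat S_i)=1-\sum_{J\subseteq K(x)}(-1)^{|J|}\hat S(J).
\ee
The singletons $\{k_{i_1}\},\ldots,\{k_{i_\ell}\}$ are disjoint ordered subsets of $K$, and $q(s_1,\ldots,s_\ell)=1-\prod_j(1-s_j)$ is linear in each argument, so Theorem~\ref{thm:basic}(iv) applies verbatim and yields $\E[\hat n_x]=q(1,\ldots,1)=1-\prod_j(1-1)=1$. Equivalently, taking expectations termwise and using $\E[\hat S(J)]=1$ from part~(ii) gives $\E[\hat n_x]=1-\sum_{J\subseteq K(x)}(-1)^{|J|}=1-(1-1)^{\ell}=1$.

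For the ``if and only if'', observe that $\hat n_x=0$ is equivalent to $\prod_{k_i\in K(x)}(1-\hat S_i)=1$. The forward direction is trivial: if no edge of $K(x)$ is sampled then every $H_i=0$, every factor $1-\hat S_i$ equals $1$, and the product is $1$. For the converse I would analyse the factors by sign and value. An unsampled edge contributes a factor of exactly $1$; a sampled edge contributes $1-1/p_i\le 0$, which is $0$ exactly when $p_i=1$. Thus if any sampled edge has $p_i=1$ the product is $0$, and if an odd number of edges are sampled (all with $p_i<1$) the product is strictly negative; in both cases it differs from $1$, so $\hat n_x\ne 0$.

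The main obstacle is the one remaining case: an even number of sampled edges, each with $p_i<1$, whose negative factors could in principle multiply to $+1$ through an algebraic coincidence among the $p_i$. I expect to rule this out by appealing to the constraints that the gSH sampling rule imposes on the probabilities of edges meeting at a common node $x$, so that no such coincidence can arise for the $p_i$ the scheme actually generates; alternatively one can state the conclusion up to a measure-zero set of probability values. Either way, for the intended purpose of deciding whether $x$ has been observed at all, $\hat n_x$ is nonzero exactly when some incident edge is sampled.
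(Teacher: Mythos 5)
Your proof of the expectation statement is exactly the paper's: the paper offers no argument beyond declaring the lemma ``a direct consequence of Theorem~\ref{thm:basic}(iv)'', and your application of part (iv) to the disjoint singletons $\{k_{i_1}\},\ldots,\{k_{i_\ell}\}$ with the multilinear polynomial $q(s_1,\ldots,s_\ell)=1-\prod_j(1-s_j)$ (equivalently, your inclusion--exclusion expansion combined with $\E[\hat S(J)]=1$ from part (ii)) is precisely that consequence, carried out correctly.

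The gap is in the converse of the ``if and only if'', and it is a gap you cannot close, because the statement is false in the generality in which it is claimed. The residual case you flag --- an even number of sampled incident edges whose negative factors multiply to $+1$ --- is not a removable technicality but an actual counterexample: run gSH$(p,q)$ with $p=q=1/2$ on two edges meeting at $x$ and suppose both are sampled (an event of positive probability); then each weight is $\hat S_i=2$, so $\hat n_x=1-(1-2)(1-2)=0$ even though both edges of $K(x)$ are in the sample. The same coincidence occurs with probabilities $1/3$ and $2/3$, since $(1-3)(1-3/2)=1$. So your hoped-for resolution --- ``appealing to the constraints that the gSH sampling rule imposes'' --- fails: gSH itself realizes the coincidence. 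What your sign analysis actually proves is this: the forward direction always holds, and the converse holds whenever every sampling probability lies in $(0,1/2)\cup\{1\}$, since then each sampled edge contributes a factor that is either $0$ or of absolute value strictly greater than $1$, unsampled edges contribute exactly $1$, and hence the product cannot equal $1$ once any edge of $K(x)$ is sampled. That hypothesis covers the paper's experimental regime ($p,q\le 0.1$, triangle-closing edges taken with probability $1$), but the lemma as printed needs such an added hypothesis (or your measure-zero/generic-parameters qualification, which is the honest fallback); no blind proof of the unqualified biconditional can succeed.
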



\vspace{16mm}
\def\gsh{gSH}
\section{Graph Sample and Hold}\label{sec-4}

\begin{table*}[t!]
\begin{center}
\begin{tabular}{|ccc|ccc|c|ccc|cccc|}
\hline
\multicolumn{3}{|c|}{Order}&\multicolumn{3}{|c|}{Selection}&Prob.&\multicolumn{3}{|c|}{Weights}&\multicolumn{4}{|c|}{Est. Node Degree}\\
${}_{(a,b)}$ &${}_{(b,c)}$ &${}_{(c,d)}$ & ${}_{(a,b)}$ &${}_{(b,c)}$ &${}_{(c,d)}$&&${}_{(a,b)}$ &${}_{(b,c)}$ &${}_{(c,d)}$&${}_a$&${}_b$&${}_c$&${}_d$\\
\hline
1&2&3& \checkmark&\checkmark&\checkmark&$p$&$1/p$&1&1&$1/p$&$1/p+1$&2&1\\
 &&&  $\cdot$ &\checkmark&\checkmark&$(1-p)p$ & 0 &$1/p$ & 1&0&$1/p$&$1/p+1$&1\\
&&&   $\cdot$&$\cdot$&\checkmark&$(1-p)^2p$ & 0 &0 & $1/p$&0&0&$1/p$&$1/p$\\
&&&   $\cdot$&$\cdot$&$\cdot$&$(1-p)^3$ & 0 &0 & 0&0&0&0&0\\
\hline
2&1&3&\checkmark&\checkmark&\checkmark&$p$&1&$1/p$&1&1&$1/p+1$&$1/p+1$&1\\
&&&\checkmark&$\cdot$&\checkmark&$(1-p)p^2$&$1/p$&0&1/p&$1/p$&$1/p$&$1/p$&$1/p$\\
&&&$\cdot$&$\cdot$&\checkmark&$(1-p)^2p$&0&0&$1/p$&0&0&$1/p$&$1/p$\\
&&&\checkmark&$\cdot$&$\cdot$&$(1-p)^2p$&$1/p$&0&0&$1/p$&$1/p$&0&0\\
&&&   $\cdot$&$\cdot$&$\cdot$&$(1-p)^3$ & 0 &0 & 0&0&0&0&0\\
\hline
1&3&2&\checkmark&\checkmark&\checkmark&$p^2$&$1/p$&1&$1/p$&$1/p$&$1/p+1$&$1/p+1$&$1/p$\\
&&&\checkmark&\checkmark&$\cdot$&$p(1-p)$&$1/p$&1&0&$1/p$&$1/p+1$&1&0\\
&&&$\cdot$&\checkmark&\checkmark&$(1-p)p$&0&1&$1/p$&0&1&$1/p+1$&1\\
&&&$\cdot$&\checkmark&$\cdot$&$(1-p)^2p$&0&$1/p$&0&0&$1/p$&$1/p$&0\\
&&&  $\cdot$ &$\cdot$&$\cdot$&$(1-p)^3$ & 0 &0 & 0&0&0&0&0\\
\hline
\end{tabular}
\caption{Estimation on a path of length 3 using \gsh$(p,1)$}\label{tab:linear}
\end{center}
\end{table*}

\subsection{Algorithms}
We now turn to specific sampling algorithms that conform to the edge sampling model of Section~\ref{sec:edge:model}. \textbf{Graph Sample
  and Hold} \gsh$(p,q)$
 is a single pass algorithm over a stream of edges. The edge
 $k$ is somewhat analogous to the key of (standard) sample and hold. 
However, the notion of key matching is different. An arriving edge is deemed to match an edge currently stored if either of its nodes match a node currently being stored (in appropriate senses for the directed and undirected case). A matching edge is sampled with probability $q$. 
If there is not a match, the edge is stored with some probability $p$.
An edge not sampled is discarded permanently. For estimation purposes
we also need to keep track of the probability with which as selected
edge is sampled. We formally specify \gsh$(p,q)$ as Algorithm~\ref{alg:select}.

\begin{algorithm}
     \caption{Graph Sample and Hold: \gsh$(p,q)$}
     \label{alg:select}
     \SetKw{Return}{return}
     \SetVline \dontprintsemicolon 
     \BlankLine
     $\hat K \assign\emptyset$\;
     \While{new weighted edge $k$}{
       \If{$k\sim k'$ some $(k',p')\in \hat K$}{$r=q$\;}
       \Else{$r=p$\;}
       Append $(k,r)$ to $\hat K$ with probability $r$\;
}
\end{algorithm}

In some sense, \gsh\ samples connected components in
the same way the standard sample and hold samples flows, although
there are some differences. The main difference is a single connected
component in the original graph may be sampled as multiple
components. This can happen, for example, if omission of an edge from
the sample can disconnect a component. Clearly, the order in which
nodes are streamed determines whether or not such sampling
disconnection can occur.

Clearly, \gsh\ would admit generalizations that allow a more complex
dependence of sampling probability for new edge on the current sampled edge
set. Just as with \gsh\ itself, the details of the sampling scheme
should allow to certain subgraphs to be favor for selection. In this
paper we do not delve into this matter in great detail, rather we look
at a simple illustrative modification of \gsh\ that favor the selection
of triangles. \gsh$_T$ is identical to $\gsh$, except that any arriving edge that 
would complete a triangle is selected with probability 1; see 
 Algorithm~\ref{alg:select:ft}. Obviously \gsh$(p,1)$ and 
\gsh$_T(p,1)$ are identical.

\begin{algorithm}
     \caption{Graph Sample and Hold for Triangles: \gsh$_T(p,q)$}
     \label{alg:select:ft}
     \SetKw{Return}{return}
     \SetVline \dontprintsemicolon 
     \BlankLine
     
     $\hat K \assign\emptyset$\; 
     \While{new weighted edge $k$}{
      \If{$k$ would complete a triangle in $\hat K$}{$r=1$\;}
       \Else{\If{$k\sim k'$ some $(k',p')\in \hat K$}{$r=q$\;}
       \Else{$r=p$\;}}
       Append $(k,r)$ to $\hat K$ with probability $r$\;
}
\end{algorithm}

\subsection{Illustration with \gsh(p,1)}

We use a simple example of a path of length 3 to illustrate that in Graph Sample and
Hold \gsh$(p,1)$, the distribution of the random graph depends on the order in
which the edges are presented. The graph $G=(V,K)$ comprises 4 nodes
$V={a,b,c,d}$ connected by $3$ undirected edges
$K=\{(a,b),(b,c),(c,d)$ which are the keys for our setting. There are
6 possible arrival orders for the keys, of which we need only analyze
$3$, the other orders being obtained by time reversal. These are
displayed in the ``Order'' columns in Table~\ref{tab:linear}. For each
order, the possible selection outcomes for the three edges by the
check marks $\checkmark$, followed by the probability of each
selection. The adjusted weights for each outcome is displayed in
``Weights'' followed by corresponding estimate of the node degree,
i.e. the sum of weights of edges incident at each node. One can check by inspection that the probability-weighted sums of the weight estimators are $1$, while the corresponding sums of the degree estimators yield the the true node degree.

\section{Experiments and Evaluation}
\label{sec-experiments}
\begin{table}[t!]
\parbox[c]{.5\textwidth}{
\begin{center}
\caption{Statistics of datasets. $n$ is the number of nodes, $N_K$ is the number of edges, $N_T$ is the number of triangles, $N_\Lambda$ is the number of connected paths of length 2, $\alpha$ is the global clustering coefficient, and $D$ is the density.}
\vspace{-1.mm}
\label{tab:data-desc}
\scalebox{0.82}{
\begin{tabular}{ccccccccc}
\toprule
graph & $n$ & $N_K$ & $N_T$ & $N_\Lambda$ & $\alpha$ & $D$ \\
\midrule
socfb-CMU & 7K & 249.9K & 2.3M & 37.4M & 0.18526 & 0.0114 \\ 
socfb-UCLA & 20K & 747.6K & 5.1M & 107.1M & 0.14314 & 0.0036 \\
socfb-Wisconsin & 24K & 835.9K & 4.8M & 121.4M & 0.12013 & 0.0029 \\
\midrule
web-Stanford & 282K & 1.9M & 11.3M & 3.9T & 0.00862 & $5.01\times 10^{-5}$ \\
web-Google & 876K & 4.3M & 13.3M & 727.4M & 0.05523 & $1.15\times 10^{-5}$ \\
web-BerkStan & 685K & 6.6M & 64.6M & 27.9T & 0.00694 & $2.83\times 10^{-5}$ \\
\bottomrule
\end{tabular}}
\end{center}}
\end{table}
\begin{table}[t!]
\parbox[c]{.49\textwidth}{
\begin{center}
\caption{Estimates of expected value, relative error, sample size, lower bounds, and upper bounds when sample size $\leq 40K$ edges, with sampling probability $p,q = 0.005$ for web-BerkStan, and $p= 0.005, q = 0.008$ otherwise. $SSize$ is the number of sampled edges, and $LB, UB$ are the $95$\% lower, and upper bound respectively.}
\vspace{-2.mm}
\label{tab:res_40K}
\scalebox{0.79}{
\begin{tabular}{ccccccc}
\toprule
& \multicolumn{5}{c}{\textbf{Edges} $N_K$} & \\
\midrule
& $N_K$ & $\hat N_K$ & $\frac{|\hat N_K-N_K|}{N_K}$ & $SSize$ & $LB$ & $UB$ \\
\bottomrule
socfb-CMU & 249.9K &249.6K &0.0013 &1.7K &236.8K &262.4K\\
socfb-UCLA & 747.6K &751.3K &0.0050 &5K &729.3K &773.34K\\
socfb-Wisconsin & 835.9K &835.7K &0.0003 &5.5K &812.2K &859.1K\\
web-Stanford & 1.9M &1.9M &0.0004 &14.8K &1.9M &2M\\
web-Google & 4.3M &4.3M &0.0007 &25.2K &4.2M &4.3M\\
web-BerkStan & 6.6M &6.6M &0.0006 &39.8K &6.5M &6.7M\\
\bottomrule
\toprule
& \multicolumn{5}{c}{\textbf{Triangles} $N_T$} & \\
\midrule
& $N_T$ & $\hat N_T$ & $\frac{|\hat N_T-N_T|}{N_T}$ & $SSize$ & $LB$ & $UB$ \\
\bottomrule
socfb-CMU & 2.3M &2.3M &0.0003 &1.7K &1.6M &2.9M\\
socfb-UCLA & 5.1M &5.1M &0.0095 &5K &4.2M &6.03M\\
socfb-Wisconsin & 4.8M &4.8M &0.0058 &5.5K &4M &5.7M\\
web-Stanford & 11.3M &11.3M &0.0023 &14.8K &3.7M &18.8M\\
web-Google & 13.3M &13.4M &0.0029 &25.2K &11.7M &15M\\
web-BerkStan & 64.6M &65M &0.0063 &39.8K &45.5M &84.6M\\
\bottomrule
\toprule
& \multicolumn{5}{c}{\textbf{Path. Length two $N_\Lambda$}} & \\
\midrule
& $N_\Lambda$ & $\hat N_\Lambda$ & $\frac{|\hat N_\Lambda-N_\Lambda|}{N_\Lambda}$ & $SSize$ & $LB$ & $UB$ \\
\bottomrule
socfb-CMU & 37.4M &37.3M &0.0018 &1.7K &32.6M &42M\\
socfb-UCLA & 107.1M &107.8M &0.0060 &5K &100.1M &115.42M\\
socfb-Wisconsin & 121.4M &121.2M &0.0018 &5.5K &108.9M &133.4M\\
web-Stanford & 3.9T &3.9T &0.0004 &14.8K &3.6T &4.2T\\
web-Google & 727.4M &724.3M &0.0042 &25.2K &677.1M &771.5M\\
web-BerkStan & 27.9T &27.9T &0.0002 &39.8K &26.5T &29.3T\\
\bottomrule
\toprule
& \multicolumn{5}{c}{\textbf{Global Clustering $\alpha$}} & \\
\midrule
& $\alpha$ & $\hat{\alpha}$ & $\frac{|\hat{\alpha}-\alpha|}{\alpha}$ & $SSize$ & $LB$ & $UB$ \\
\bottomrule
socfb-CMU & 0.18526 &0.18574 &0.00260 &1.7K &0.14576 &0.22572\\
socfb-UCLA & 0.14314 &0.14363 &0.00340 &5K &0.12239 &0.16487\\
socfb-Wisconsin & 0.12013 &0.12101 &0.00730 &5.5K &0.10125 &0.14077\\
web-Stanford & 0.00862 &0.00862 &0.00020 &14.8K &0.00257 &0.01467\\
web-Google & 0.05523 &0.05565 &0.00760 &25.2K &0.04825 &0.06305\\
web-BerkStan & 0.00694 &0.00698 &0.00680 &39.8K &0.00496 &0.00900\\
\bottomrule
\end{tabular}}
\end{center}}
\end{table}
We test the performance of our proposed framework (gSH) as described in Algorithm~\ref{alg:select:ft} (with $r=1$ for edges that are closing triangles) on various social and information networks with $250K$--$7M$ edges. For all of the following networks, we consider an undirected graph, discard edge weights, self-loops, and we generate the stream by randomly permuting the edges. Table~\ref{tab:data-desc} summarizes the main characteristics of these graphs, such that $n$ is the number of nodes, $N_K$ is the number of edges, $N_T$ is the number of triangles, $N_\Lambda$ is the number of connected paths of length 2, $\alpha$ is the global clustering coefficient, and $D$ is the density. 
\begin{enumerate}
\item \textbf{Social Facebook Graphs}. Here, the nodes are people and edges represent friendships among Facebook users in three different US schools (CMU, UCLA, and Wisconsin)~\cite{traud2012social}.
\vspace{-2.mm}
\item \textbf{Web Graphs}. Here, the nodes are web-pages and edges are hyperlinks among these pages in different domains~\cite{leskovecRepository}.
\end{enumerate}
\vspace{-2.mm}
From Table~\ref{tab:data-desc}, we observe that social Facebook graphs are generally dense as compared to the web graphs.
\noindent
We ran the experiments on MacPro 2.66GHZ 6-Core Intel processor, with 48GB memory. In order to test the effect of parameter settings (i.e., $p$ and $q$), we perform $100$ independent experiments and we consider all possible combinations of $p$ and $q$ in the following range,
\begin{align*}
p,q =\{0.005, 0.008, 0.01, 0.03, 0.05, 0.08, 0.1\}
\end{align*}
Our experimental procedure is done as follows, independently for each $p=p_i,q=q_i$:
\vspace{-2.mm}
\begin{enumerate}
\item Given one parameter setting $p=p_i,q=q_i$, obtain a sample $\hat K$ using $gSH_T$($p_i$,$q_i$) (as in Algorithm~\ref{alg:select:ft})
\vspace{-2.mm}
\item Using S, compute the unbiased estimates of the following statistics:
  Edge counts $\hat N_K$; Triangle counts $\hat N_T$;  Connected paths of length two $\hat N_\Lambda$;  Global Clustering Coefficient $\hat \alpha$.
\item Compute the unbiased estimates of their variance
\end{enumerate}
\vspace{-2.mm}
\noindent

\subsection{Performance Analysis}
We proceed by first demonstrating how accurate the proposed framework's estimates for all the different graph statistics we discuss in this paper across various social and information networks. Given a sample $\hat{K} \subset K$, we consider the absolute relative error (i.e., $\frac{|E(est)-Actual|}{Actual}$) as a measure of how far is the estimate from the actual graph statistic of interest, where $E(est)$ is the mean estimated value across $100$ independent runs. Table~\ref{tab:res_40K} provides the estimates in comparison to the actual statistics when the sample size is $\leq 40K$ with $p,q=0.005$ for web-BerkStan and $p=0.005$, $q=0.008$ otherwise.   
We summarize below our main findings from Table~\ref{tab:res_40K}:
\vspace{-2.mm}
\begin{itemize}
\item For edge count estimates ($N_K$), we observe that the relative error is in the range of $0.03$\% -- $0.5$\% across all graphs.
\vspace{-2.mm}
\item For triangle count estimates ($N_T$), we observe that the relative error is in the range of $0.03$\% -- $0.95$\% across all graphs.
\vspace{-6.mm}
\item For estimates of the number of connected paths of length two ($N_\Lambda$), we observe that the relative error is in the range of $0.02$\% -- $0.6$\% across all graphs.
\vspace{-2.mm}
\item For global clustering coefficient estimates ($\alpha$), we observe that the relative error is in the range of $0.02$\% -- $0.76$\% across all graphs.
\vspace{-2.mm}
\item We observe that graphs that are more dense (such as socfb-UCLA) show higher error rates as compared to sparse graphs (such as web-Stanford).
\vspace{-2.mm}
\item From all above, we observe that the highest error is in the triangle count estimates and yet it is still $\leq 1$\%.
\end{itemize}
\vspace{-2.mm}

\begin{figure*}
\begin{center}
\subfigure[Edges]{\label{fig:socfb-UCLA-m}\includegraphics[width=0.33\linewidth]{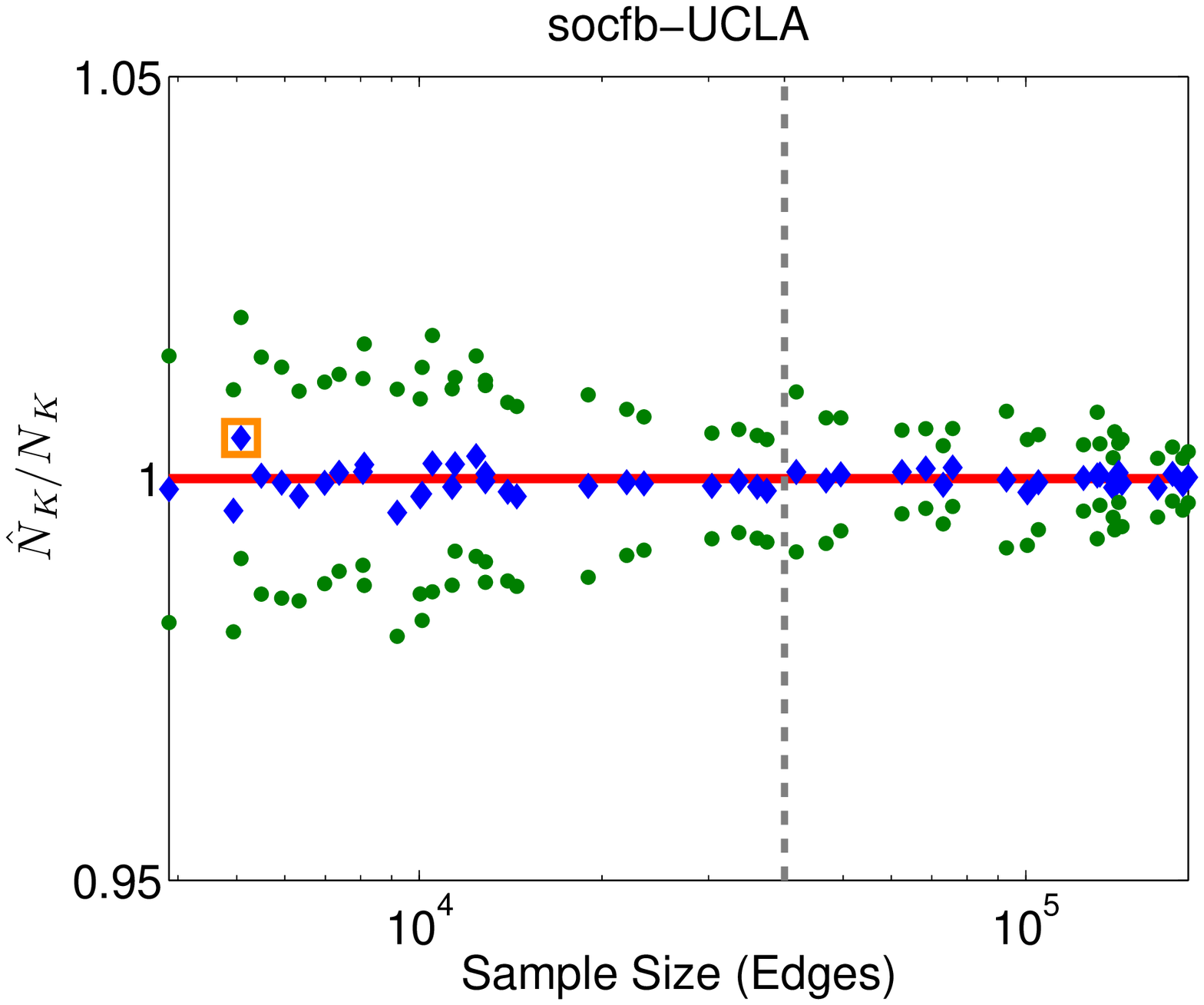}}
\subfigure[Triangles]{\label{fig:socfb-UCLA-T}\includegraphics[width=0.33\linewidth]{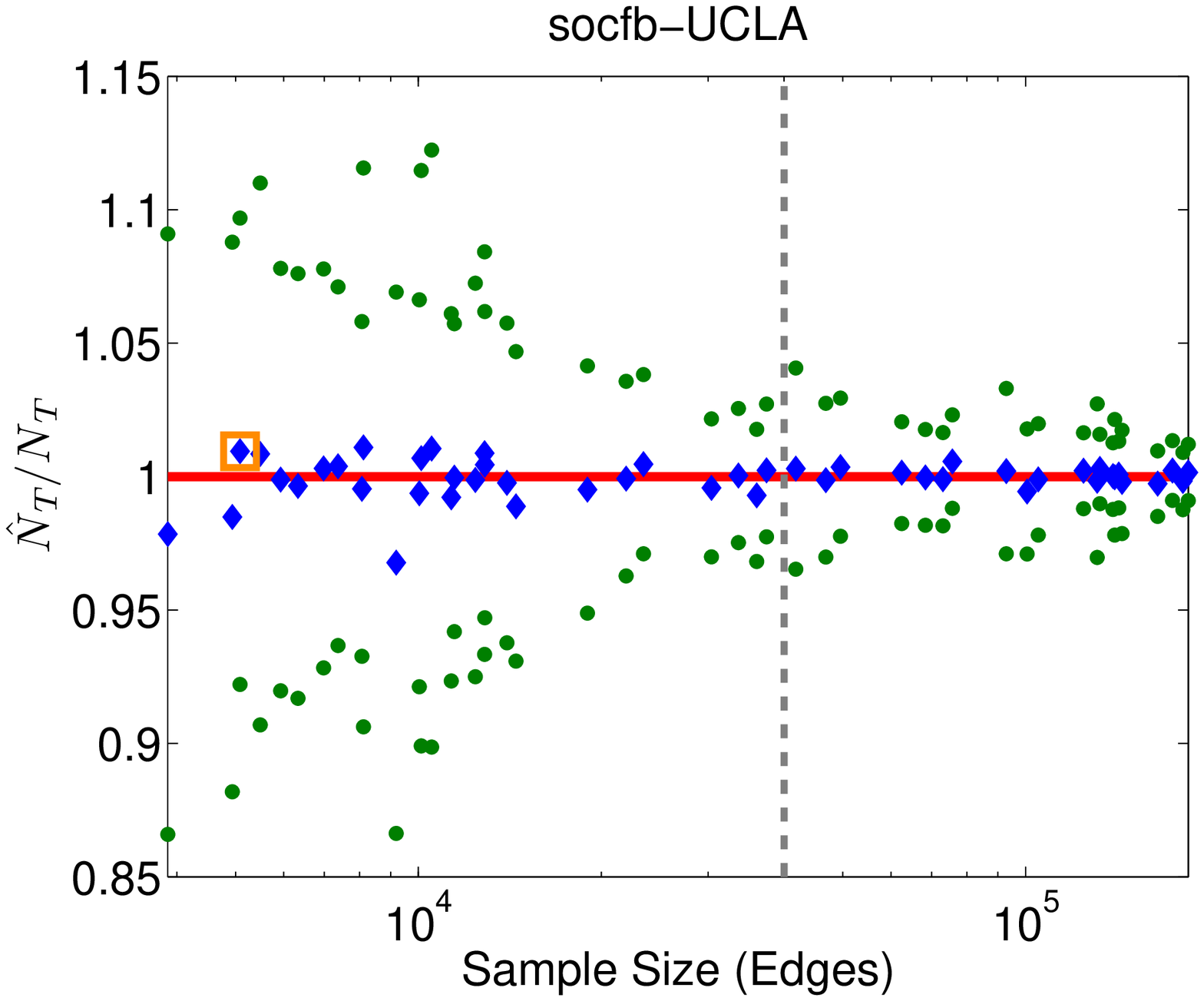}}
\subfigure[Path len.2]{\label{fig:socfb-UCLA-R}\includegraphics[width=0.33\linewidth]{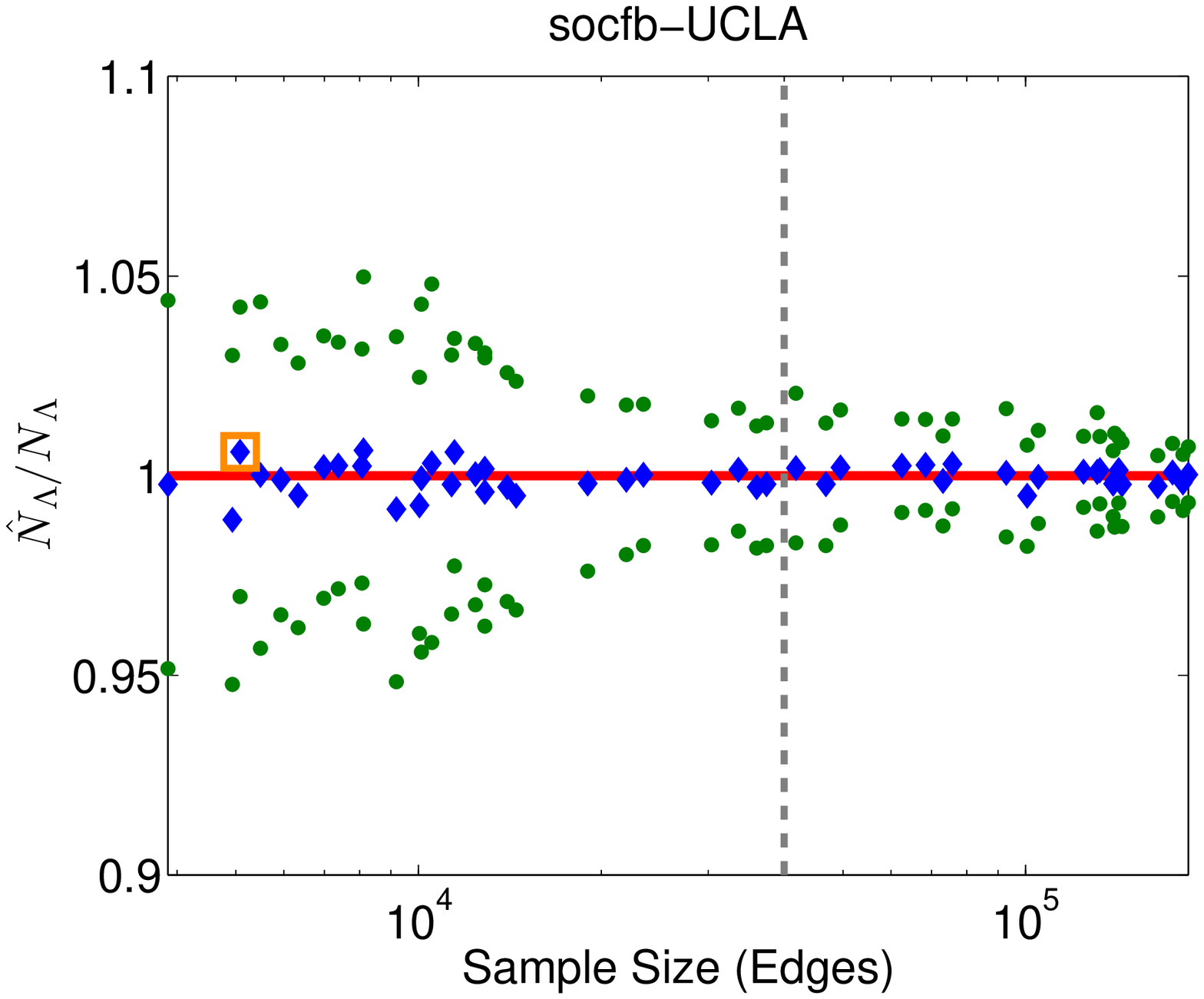}}
\vspace{-2.mm}
\subfigure[Edges]{\label{fig:socfb-Wisconsin87-m}\includegraphics[width=0.33\linewidth]{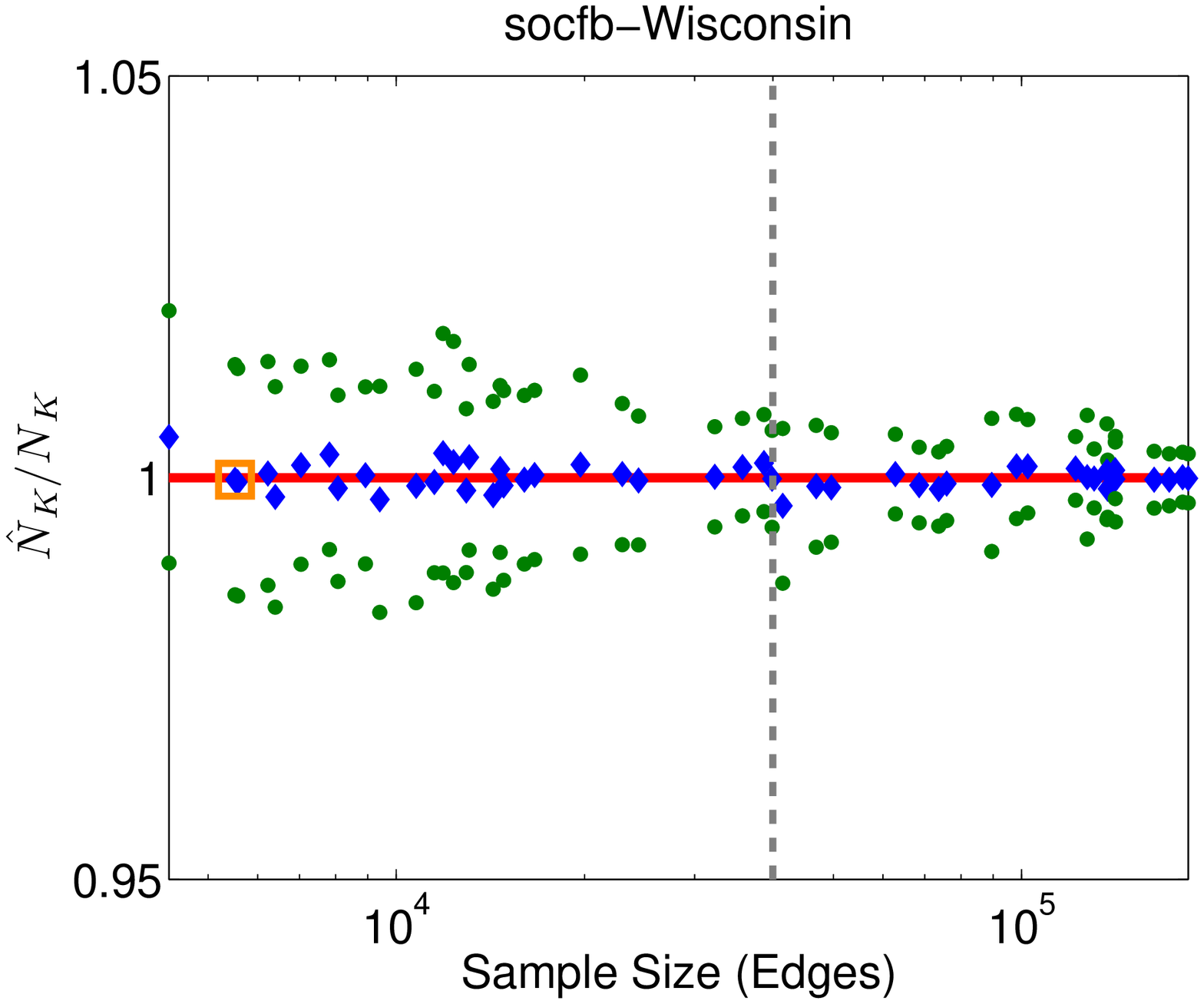}}
\subfigure[Triangles]{\label{fig:socfb-Wisconsin87-T}\includegraphics[width=0.33\linewidth]{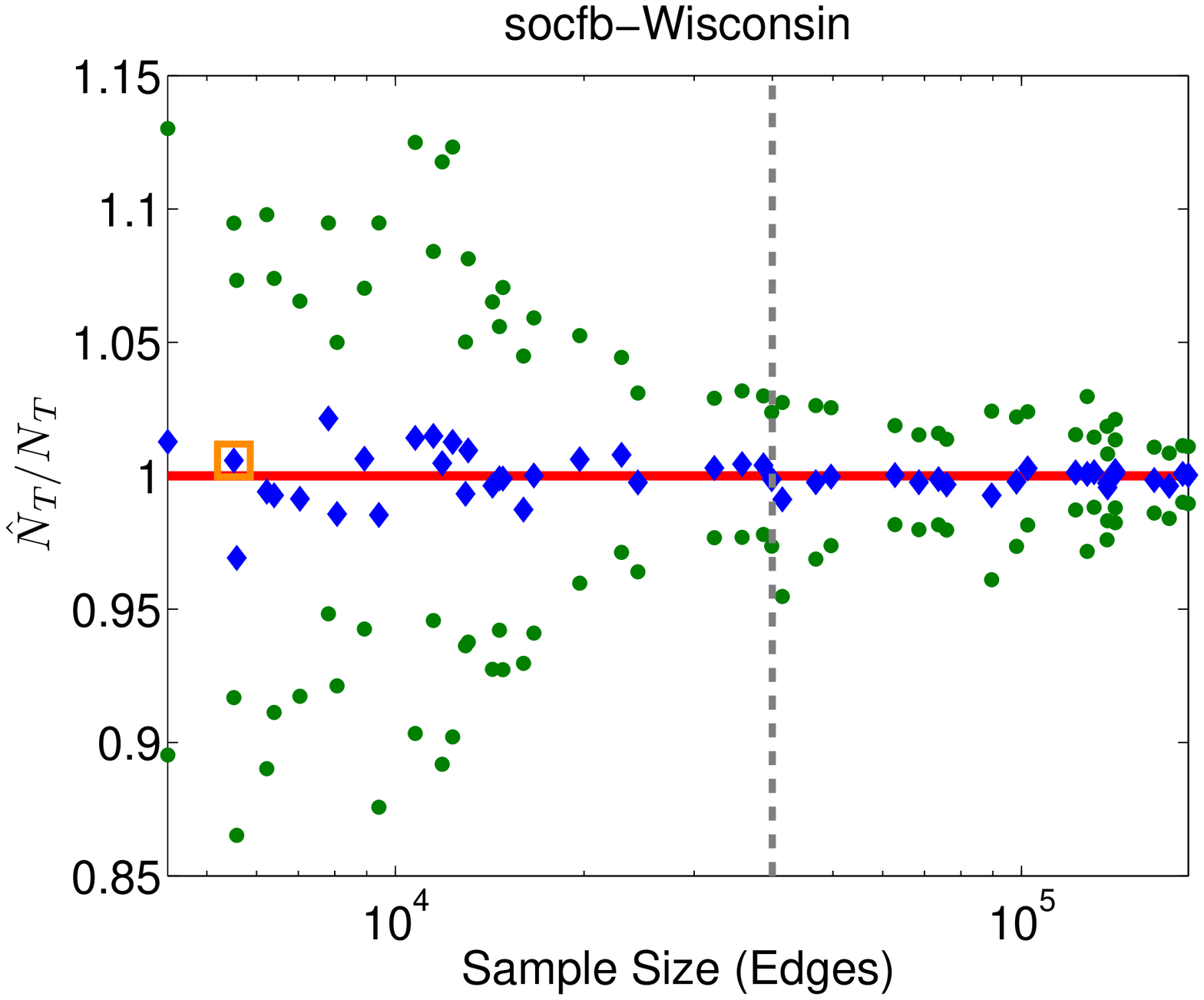}}
\subfigure[Path len.2]{\label{fig:socfb-Wisconsin87-R}\includegraphics[width=0.33\linewidth]{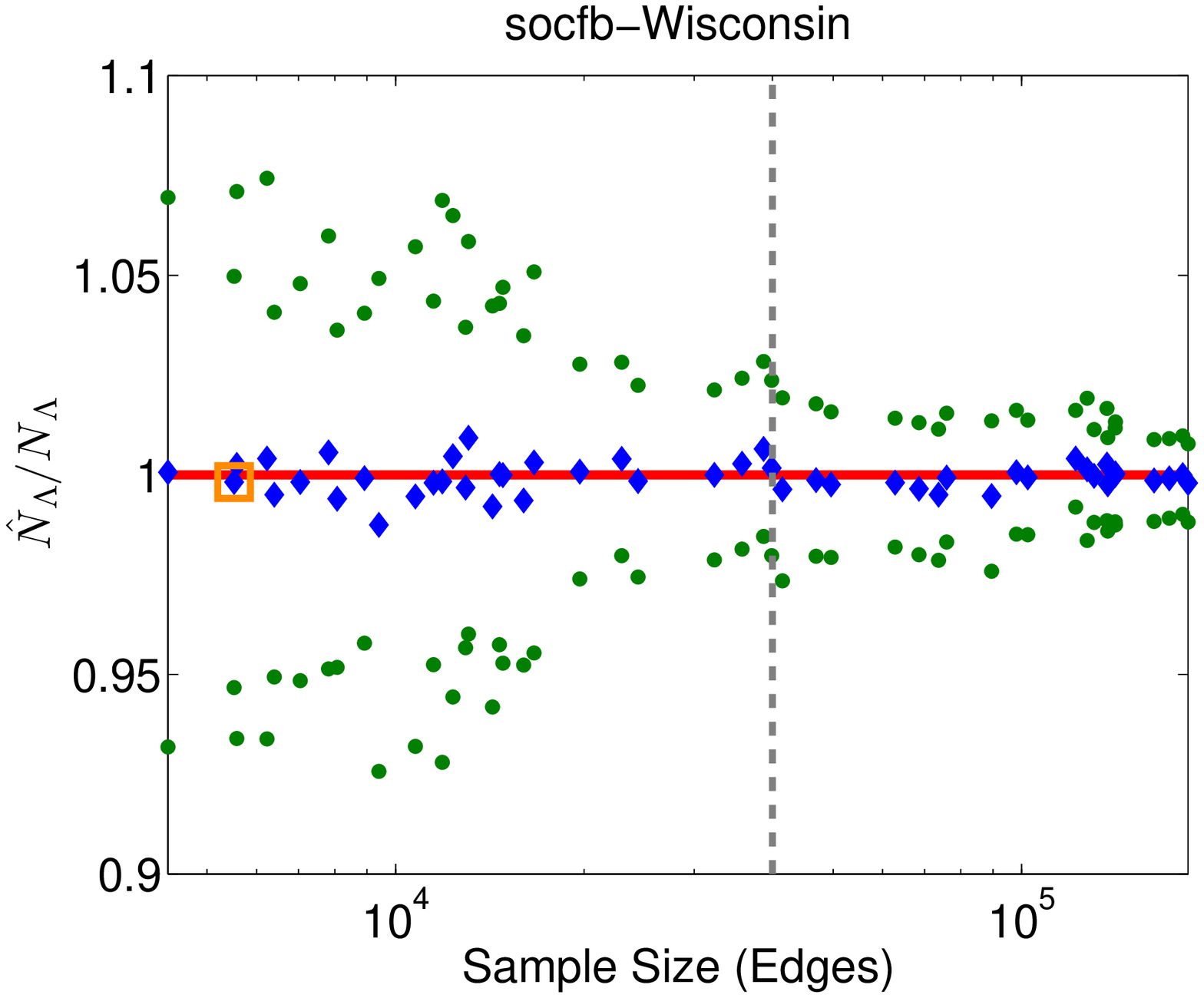}}
\vspace{-2.mm}
\subfigure[Clust.]{\label{fig:socfb-UCLA-k}\includegraphics[width=0.33\linewidth]{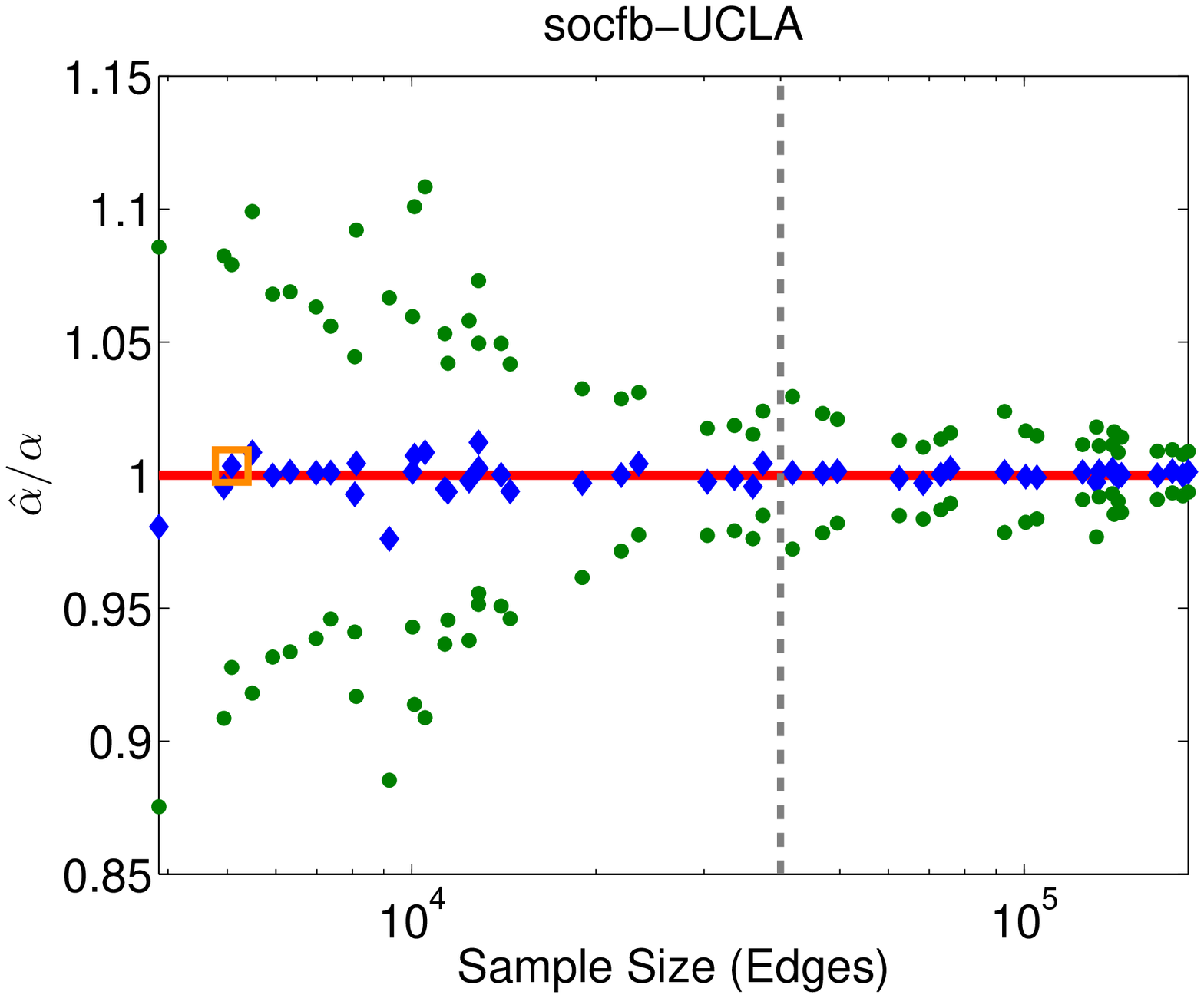}}
\subfigure[Clust.]{\label{fig:socfb-Wisconsin87-k}\includegraphics[width=0.33\linewidth]{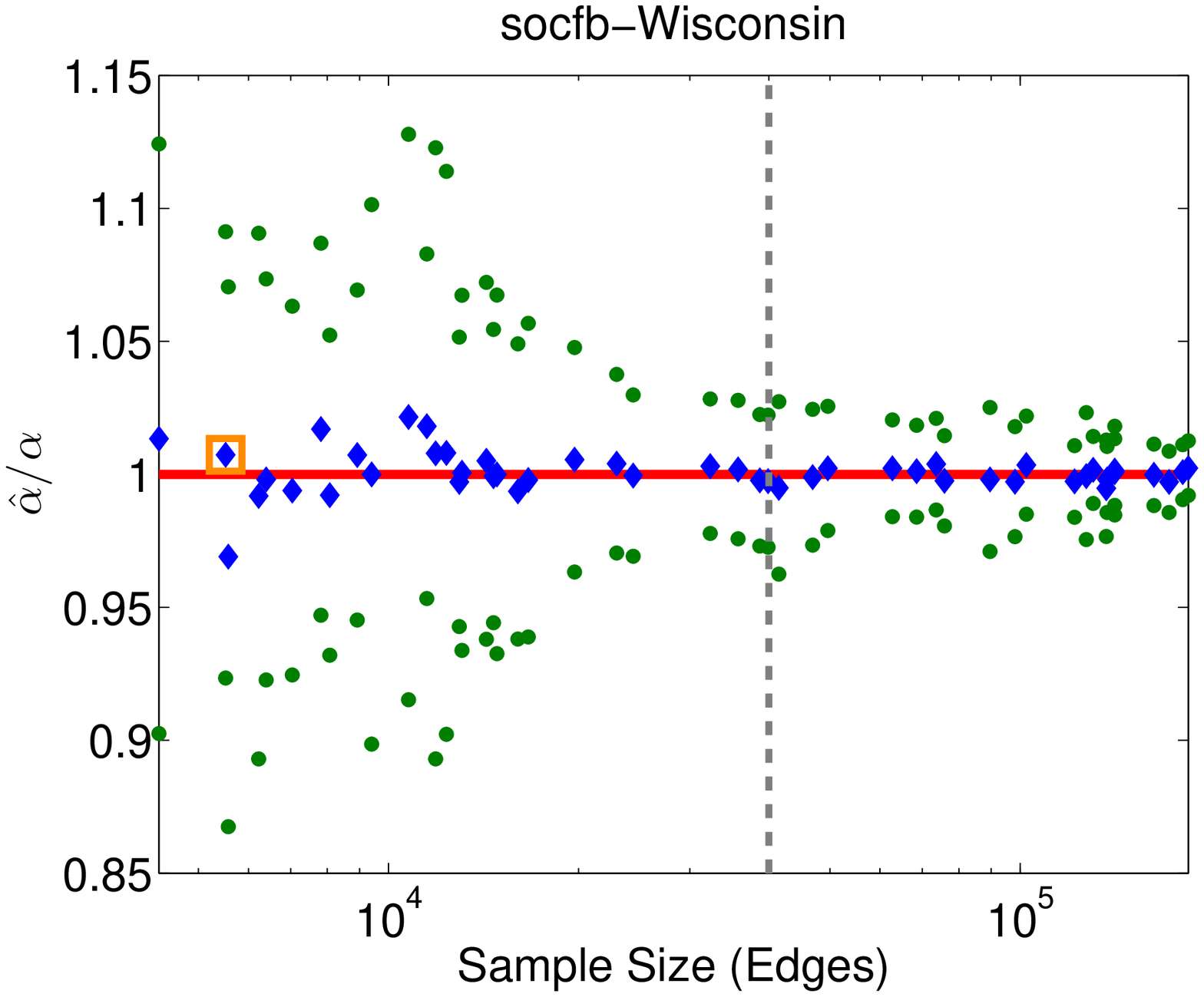}}
\vspace{-4.mm}
\caption{Convergence of the estimates ($N_K$, $N_T$, $N_\Lambda$, $\alpha$, upper and lower bounds) for socfb-UCLA and socfb-Wisconsin graphs, for all possible samples with $p,q$ in the range $0.005$--$0.1$.
Diamonds (Blue): $\frac{E(est)}{Actual}$. Circles (Green): $\frac{UB}{Actual}, \frac{LB}{Actual}$. Square (Gold): refers to the sample in Table~\ref{tab:res_40K}. Dashed line (Grey): refers to the sample with sample size = 40K edges}
\label{fig:convergence}
\end{center}
\end{figure*}

\begin{figure*}
\begin{center}
\subfigure{\label{fig:web-Google-srate}}\includegraphics[width=0.25\linewidth]{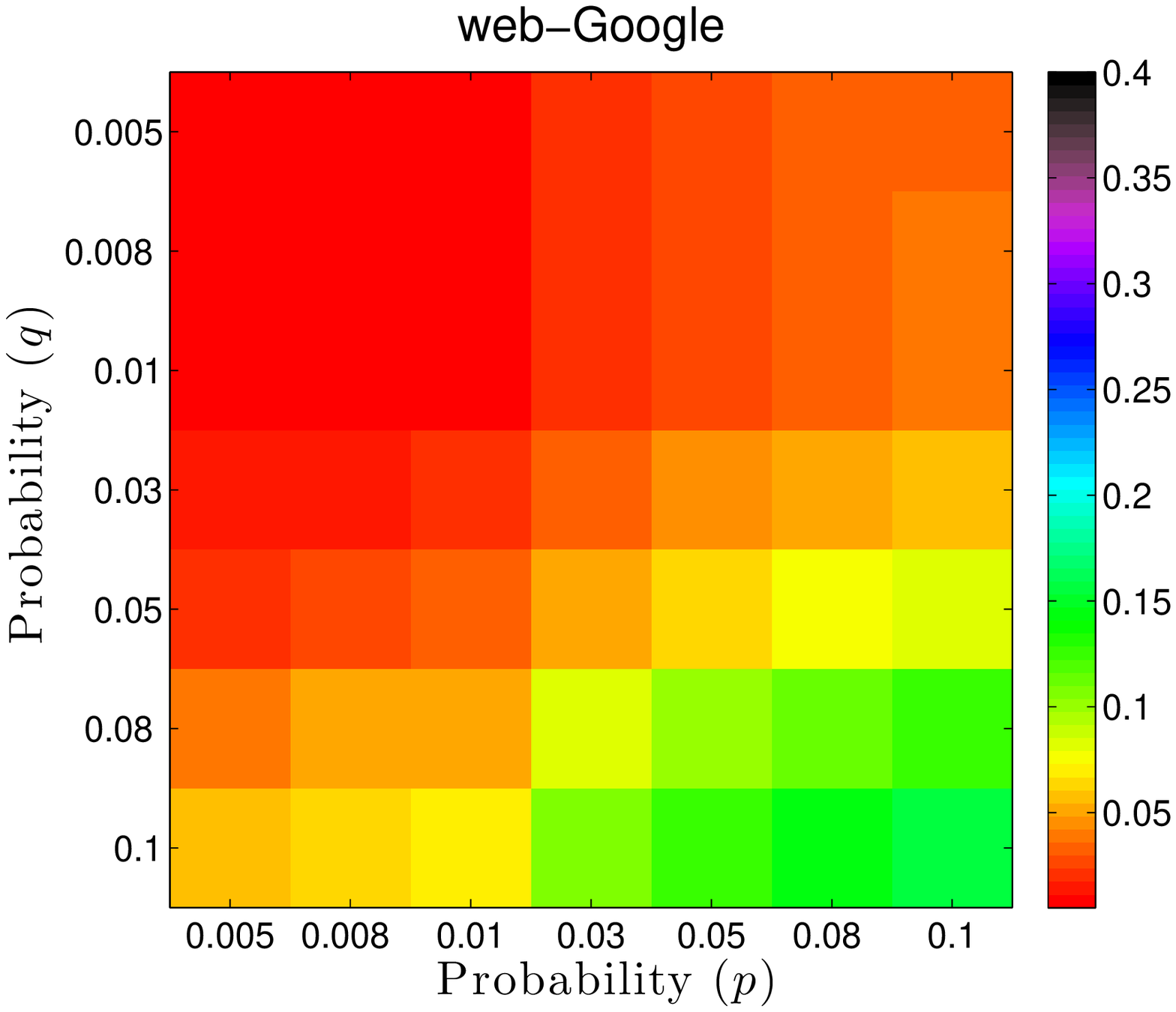}
\hspace{-2.mm}
\subfigure{\label{fig:web-Stanford-srate}}\includegraphics[width=0.25\linewidth]{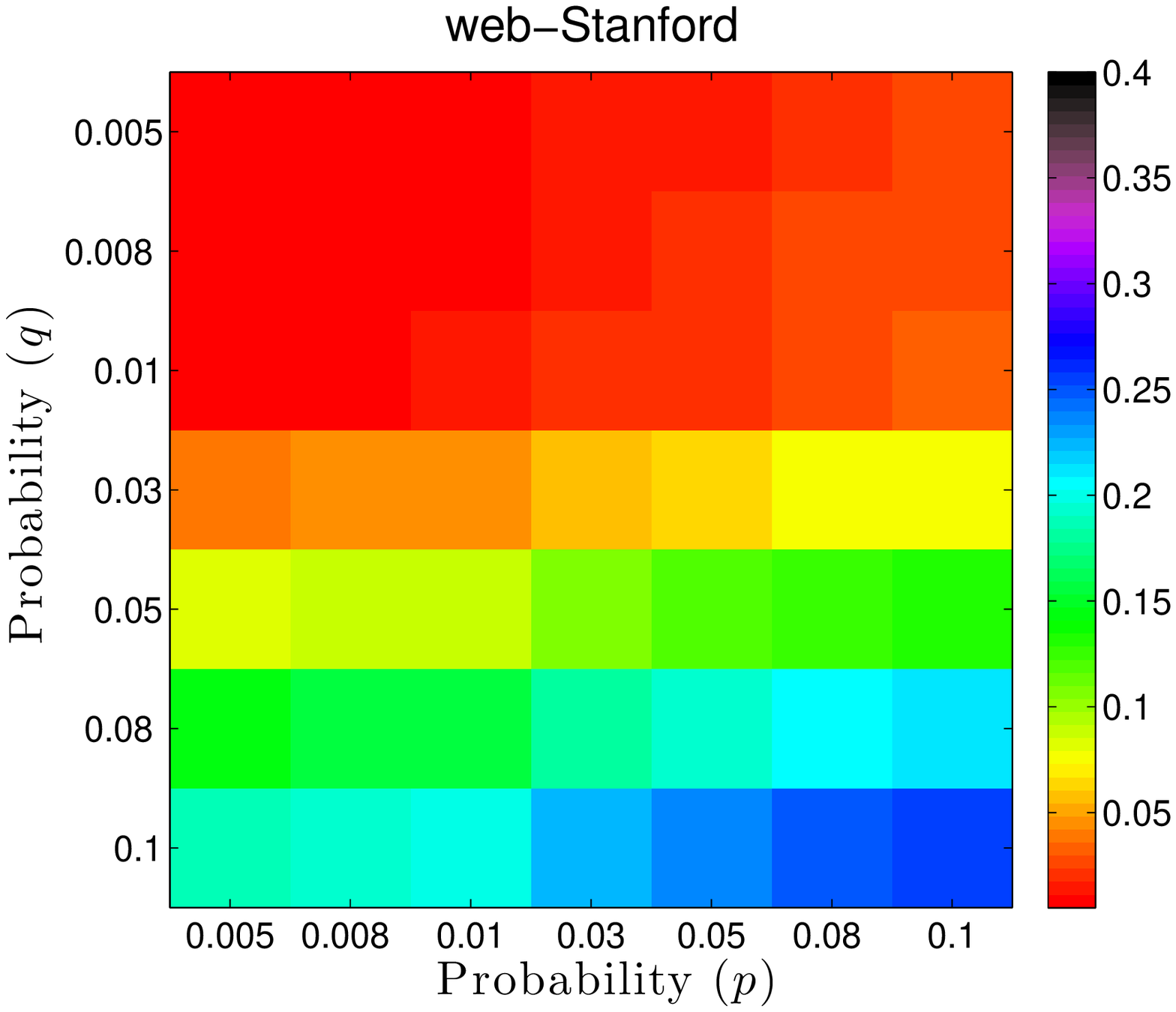}
\hspace{-2.mm}
\subfigure{\label{fig:socfb-Wisconsin87-srate}}\includegraphics[width=0.25\linewidth]{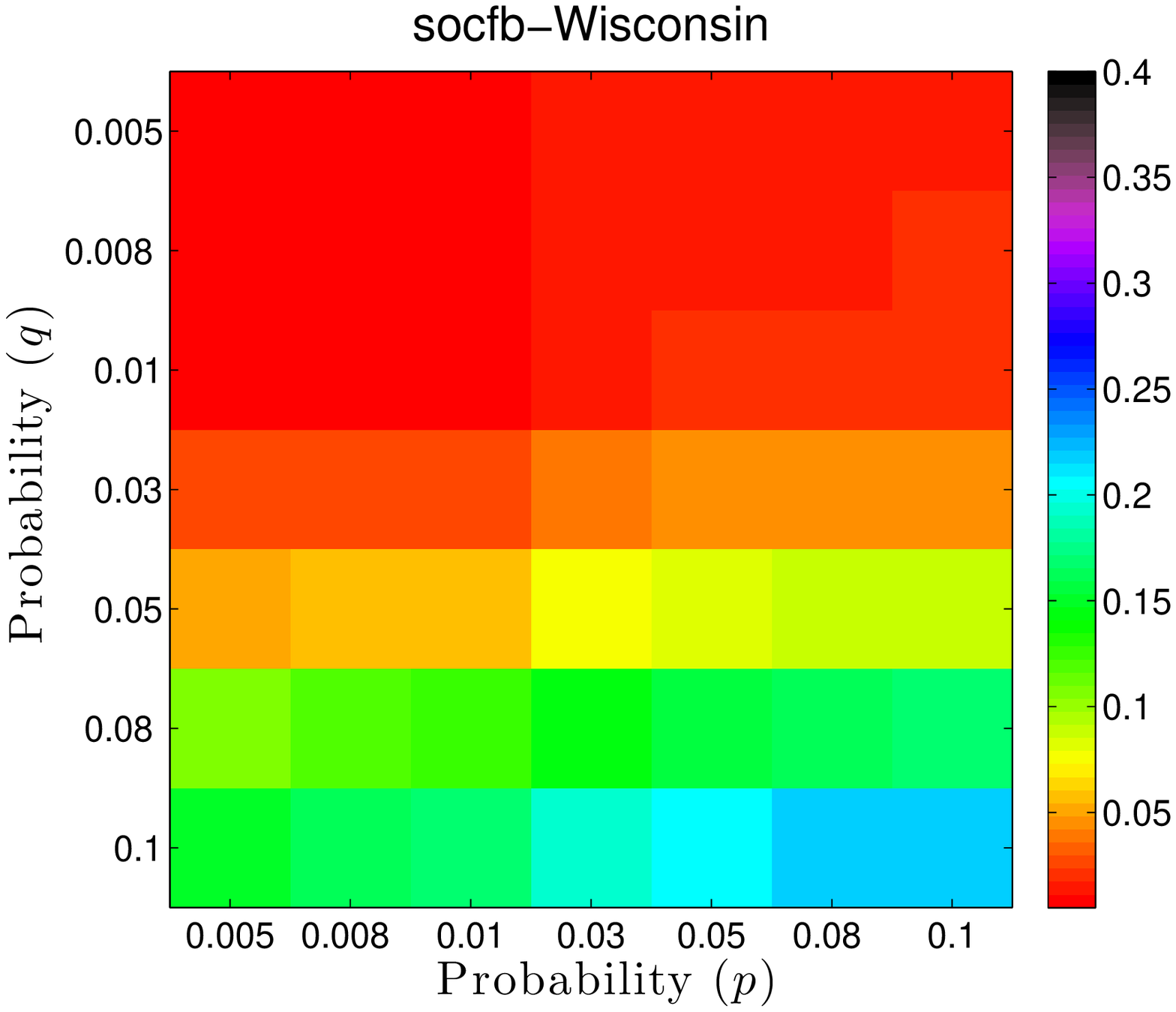}
\hspace{-2.mm}
\subfigure{\label{fig:socfb-CMU-srate}}\includegraphics[width=0.25\linewidth]{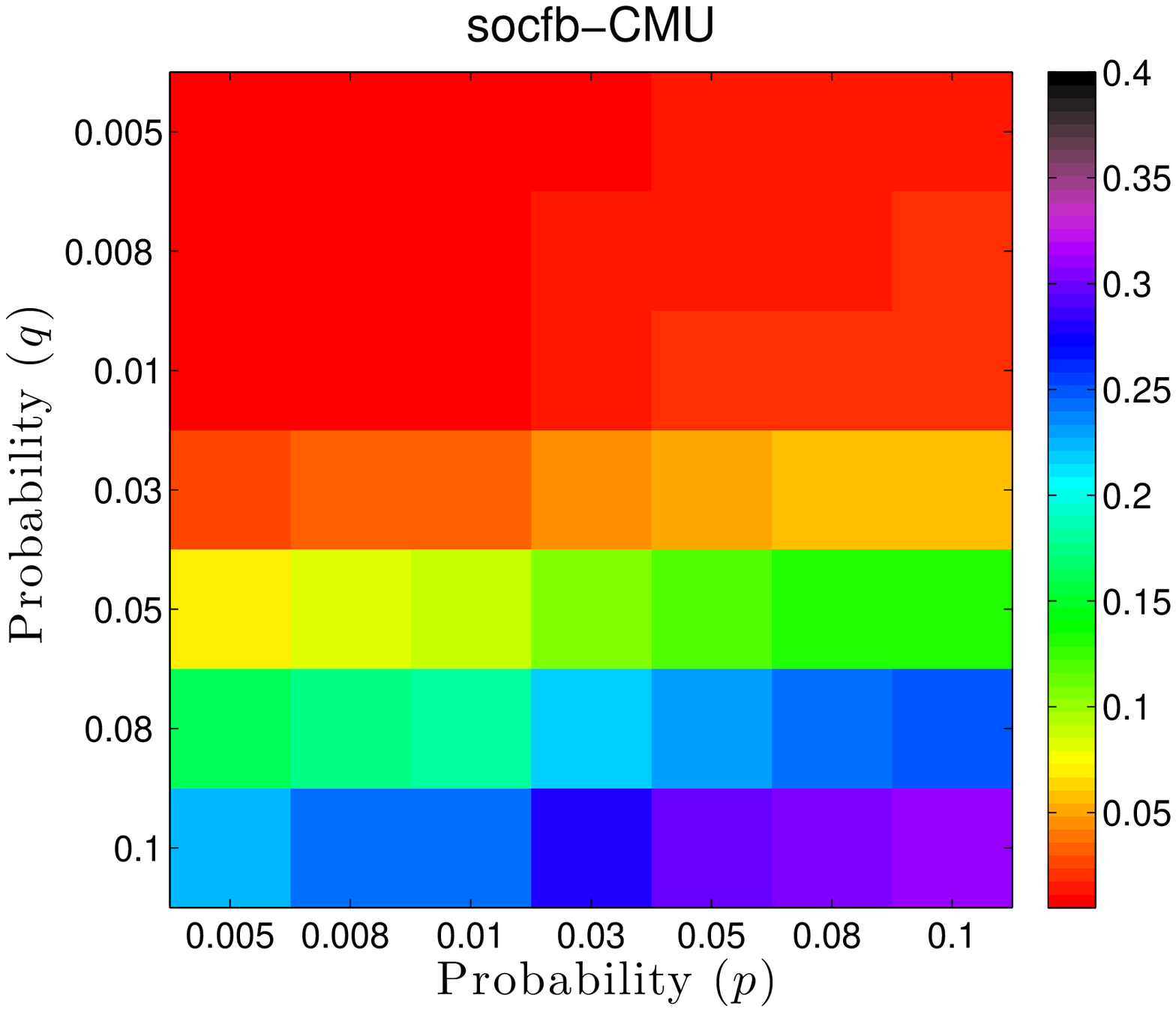}
\vspace{-4.mm}
\caption{Sampling Fraction ($\frac{SSize}{m}$, where $SSize$ is the number of sampled edges) as $p,q$ changes in the range $0.005$--$0.1$ for  web-Google, web-Stanford, socfb-Wisconsin, and socfb-CMU graphs (ordered from sparse $\rightarrow$ dense).}
\label{fig:sample-size}
\end{center}
\end{figure*}

\subsection{Confidence Bounds}
Having selected a sample that can be used to estimate the actual statistic, it is also desirable to construct a confidence interval within which we are \emph{sufficiently} sure that the actual graph statistic of interest lies. 
We construct a $95$\% confidence interval for the estimates for edge ($N_K$), triangle ($N_T$), connected paths of  length 2 ($N_\Lambda$) counts, and clustering coefficient ($\alpha$) as follows, 
\begin{align}
est \pm 1.96 \sqrt{Var(est)} 
\end{align}
where the estimates $est$ and $Var(est)$ are computed using the equations of the unbiased estimators of counts and their variance as discussed in Section~\ref{sec-framework}. For example, the $95$\% confidence interval for the  edge count is, 
\begin{align}
\hat N_K \pm 1.96 \sqrt{Var(\hat N_K)} 
\end{align}
where $UB = \hat N_K+1.96 \sqrt{Var(\hat N_K)}$, $LB = \hat N_K-1.96 \sqrt{Var(\hat N_K)}$ are the upper and lower bounds on the edge count respectively.
\noindent
Table~\ref{tab:res_40K} provides the $95$\% upper and lower bounds (i.e., $UB,LB$) for the sample when the sample size is $\leq 40K$ edges. We observe that the actual statistics across all different graphs lie in between the bounds of the confidence interval (i.e., $LB \leq Actual \leq UB$).

Additionally, we study the properties of the sampling distribution of our proposed framework (gSH) as we change the sample size. Figure~\ref{fig:convergence} shows the sampling distribution as we increase the sample size (for all possible settings of $p,q$ in the range $0.005$--$0.1$ as described previously). More specifically, we plot the fraction $\frac{E(est)}{Actual}$ (blue diamonds in the figure), where $E(est)$ is the mean estimated value across $100$ independent runs. Further, we plot the fractions $\frac{UB}{Actual}$, and $\frac{LB}{Actual}$ (green circles in the figure). These plots show the sampling distribution of all statistics for socfb-UCLA, and socfb-Wisconsin graphs.
\noindent
We now summarize our findings from Figure~\ref{fig:convergence}:
\vspace{-5.mm}
\begin{itemize}
\item We observe that the sampling distribution is centered and balanced over the red line ($y_{axis} =1$) which represents the actual value of the graph statistic. This observation shows the unbiased properties of the estimators for the four graph quantities of interest that we discussed in Section~\ref{sec-framework}. 
\vspace{-2.mm}
\item We observe that the upper and lower bounds contain the actual value (represented by the red line) for different combinations of $p,q$
\vspace{-2.mm}
\item We observe that as we increase the sample size, the bounds \emph{converge} to be more concentrated over the actual value of the graph statistic (i.e, variance is decreasing)
\vspace{-2.mm}
\item We observe that the confidence intervals for edge counts are small in the range of $0.98$--$1.02$
\vspace{-2.mm}
\item We observe that the confidence intervals for triangle counts and clustering coefficient are large compared to other graph statistics (in the range of $0.87$--$1.12$).
\vspace{-2.mm}
\item We observe that samples with size $= 40K$ edges provide a reasonable tradeoff between the sample size and unbiased estimates with low variance  
\vspace{-2.mm}
\item Thus we conclude that the sampling distribution of the proposed framework has many desirable properties of unbiasedness and low variance as we increase the sample size. 
\vspace{-2.mm}
\end{itemize}
Note that in Figure~\ref{fig:convergence}, we use a square (with gold color) to refer to the sample reported in Table~\ref{tab:res_40K}. We also found similar observations for the remaining graphs (omitted due to space constraints).

In addition to the analysis above, we compute the exact coverage probability $\gamma$ of the $95$\% confidence as follows,
\begin{align}
\gamma = P(LB \leq Actual \leq UB)
\end{align}

For each $p = p_i,q = q_i$, we compute the proportion of samples in which the actual statistic lies in the confidence interval across $100$ independent sampling experiments $gSH_T(p_i,q_i)$. We vary $p,q$ in the range of $0.005$--$0.01$, and for each possible combination of $p,q$ (e.g., $p=0.005, q=0.008$), we compute the exact coverage probability $\gamma$. Table~\ref{tab:conf-level} provides the mean coverage probability with $p,q = \{0.005, 0.008,0.01\}$ for all different graphs. Note $\gamma_{N_K}$, $\gamma_{N_T}$, $\gamma_{N_\Lambda}$, and $\gamma_{\alpha}$ indicate the exact coverage probability of edge, triangle, path length 2 counts, and clustering coefficient respectively. We see that the nominal $95$\% confidence interval holds to a good approximation, as $\gamma \approx 95$\% across all graphs.   

\begin{table}[t!]
\parbox[c]{.4\textwidth}{
\begin{center}
\caption{Coverage Probability $\gamma$ for $95$\% confidence interval \label{tab:conf-level}}
{
\begin{tabular}{ccccccccccc}
\toprule
graph & $\gamma_{N_K}$ &  & $\gamma_{N_T}$ & & $\gamma_{N_\Lambda}$ & & $\gamma_\alpha$ \\
\midrule
socfb-CMU & 0.94 &  & 0.95 & & 0.96 &  & 0.92\\
socfb-UCLA & 0.96 &  & 0.95 & & 0.95 & & 0.92 \\
socfb-Wisconsin & 0.95 & & 0.95 & & 0.96 & & 0.95 \\
web-Stanford & 0.97 & & 0.92 & & 0.95 & & 0.92 \\
web-Google & 0.95 & & 0.93 & & 0.95 & & 0.95 \\
web-BerkStan & 0.96 & & 0.94 & & 0.93 & & 0.93 \\
\bottomrule
\end{tabular}}
\end{center}}

\parbox[c]{.4\textwidth}{
\begin{center}
\caption{The relative error and sample size of Jha~\cite{jha2013space} in comparison to our framework for triangle count estimation \label{tab:comp_kdd13}}
\vspace{-1.mm}
{
\begin{tabular}{llllr}
\toprule
& \multicolumn{2}{c}{Jha \etal~\cite{jha2013space}} &  \multicolumn{2}{c}{gSH} \\
\cmidrule(r){2-3}
\cmidrule(r){4-5}
graph & $\frac{|\hat N_T-N_T|}{N_T}$ & $SSize$ & $\frac{|\hat N_T-N_T|}{N_T}$ & $SSize$ \\ 
\midrule
web-Stanford & $\approx 0.07$ & 40K & 0.0023  & 14.8K \\
web-Google &	 $\approx 0.04$  & 40K & 0.0029 & 25.2K \\
web-BerkStan & $\approx 0.12 $  & 40K & 0.0063 & 39.8K \\
\bottomrule
\end{tabular}}
\end{center}}
\end{table}
\subsection{Comparison to Previous Work}
We compare to the most recent research done on triangle counting by Jha \etal~\cite{jha2013space}. Jha \etal proposed a Streaming-Triangles algorithm to estimate the triangle counts. Their algorithm maintains two data structures. The first data structure is the edge reservoir and used to maintain a uniform random sample of
edges as they streamed in. The second data structure is the wedge (path length two) reservoir and used to select a uniform sample of wedges created by the edge reservoir. The algorithm proceeds in a reservoir sampling fashion as a new edge $e_t$ is streaming in. Then, edge $e_t$ gets the chance to be sampled and replace a previously sampled edge with probability $1/t$. Similarly, a randomly selected new wedge (formed by $e_t$) replaces a previously sampled wedge from the wedge reservoir. Table~\ref{tab:comp_kdd13} provides a comparison between our proposed framework (gSH) and the Streaming-Triangles algorithm proposed in ~\cite{jha2013space}. Note that we compare with the results reported in their paper.   

From Table~\ref{tab:comp_kdd13}, we observe that across the three web graphs, our proposed framework has a relative error \emph{orders of magnitude} less than the Streaming-Triangles algorithm proposed in ~\cite{jha2013space}, as well as with a small(er) overhead storage (in most of the graphs). We note that the work done by Jha \etal~\cite{jha2013space} compares to other state of the art algorithms and shows that they are not practical and produce a very large error; see Section~\ref{sec-related} for more details.

\subsection{Effect of $p,q$ on Sampling Rate}

While Figure~\ref{fig:convergence} shows that the sampling distribution of the proposed framework is unbiased regardless the choice of $p,q$, the question of what is the effect of the choice of $p,q$ on the sample size still needs to be explored. In this section, we study the effect of the choice of parameter settings on the fraction of edges sampled from the graph. 

Figure~\ref{fig:sample-size} shows the fraction of sampled edges as we vary $p,q$ in the range of $0.005$--$0.1$ for two web graphs and two social Facebook graphs. Note that the graphs are ordered by their density (check Table~\ref{tab:data-desc}) going from the most sparse to the most dense graph. We observe that when $q \leq 0.01$, regardless the choice of $p$, the fraction of sampled edges is in the range of $0.5$\% -- $2.5$\% of the total number of edges in the graph. We also observe that as $q$ goes from $0.01$ to $0.03$, the fraction of sampled edges would be in the range of $2.75$\% -- $5$\%. These observations hold for all the graphs we studied.

On the other hand, as $q$ goes from $0.03$ to $0.1$, the fraction of sampled edges depends on whether the graph is dense or sparse. For example, for web-Google graph, as $q$ goes from $0.03$ to $0.1$, the fraction of sampled edges goes from $5$\% to $15$\%. Also, for web-Stanford graph, as $q$ goes from $0.03$ to $0.1$, the fraction of sampled edges goes from $5$\% to $25$\%. Moreover, for the most dense graph we have in this paper (socfb-CMU), the fraction of sampled edges goes from $5$\% to $31$\%. Note that when we tried $q=1$, regardless the choice of $p$, at least more than $80$\% of the edges were sampled. 

Since $p$ is the probability of sampling a fresh edge (not adjacent to a previously sampled edge), one could think of $p$ as the probability of random jumps (similar to random walk methods) to explore unsampled regions in the graph. On the other hand, $q$ is the probability of sampling an edge adjacent to previous edges. Therefore, one could think of $q$ as the probability of exploring the neighborhood of previously sampled edges (similar to the forward probability in Forest Fire sampling~\cite{leskovec2006slg}).

\noindent
From all the discussion above, we conclude that using a small $p,q$ settings (i.e., $\leq 0.008$) is better to control the fraction of sampled edges, and also recommended since the sampling distribution of the proposed framework is unbiased regardless the choice of $p,q$ as we show in Figure~\ref{fig:convergence} (also see Section~\ref{sec-framework}). However, if a tight confidence interval is needed, then increasing $p,q$ helps reduce the variance estimates.

\begin{table}[t!]
\parbox[c]{.5\textwidth}{
\begin{center}
\caption{Elapsed time (seconds) for counting edges, triangles, and paths of len.2}
\label{tab:time}
\begin{tabular}{lcccc}
\toprule
& \multicolumn{2}{c}{Full Graph} &  \multicolumn{2}{c}{Sampled Graph} \\
\cmidrule(r){2-3}
\cmidrule(r){4-5}
graph & Time & Graph size & Time & SSize\\
\midrule
web-Stanford & 19.68  & 1.9M & 0.13 & 14.8K\\
web-Google & 5.05 & 4.3M & 0.55 & 25.2K \\
web-BerkStan & 113.9 & 6.6M & 1.05 & 39.8K\\
\bottomrule
\end{tabular}
\end{center}}
\end{table}
\subsection{Implementation Issues} 
\noindent
In practice, statistical variance estimators are costly to compute. In this paper, we provide an efficient parallel procedure to compute the variance estimate. We take triangles as an example. Consider for example any pair of triangles $\tau$ and $\tau'$, assuming $\tau$ and $\tau'$ are not identical, the covariance of $\tau$ and $\tau'$ is greater than zero, if and only if the two triangles are intersecting. Since two intersecting triangles have either one edge in common or are identical, we can find intersecting triangles by finding all triangles \emph{incident} to a particular edge $e$. In this case, the intersection probability of the two triangles is $P(\tau\cap\tau') = P(e)$. Note that if $\tau$ and $\tau'$ are identical, then the computation is straightforward.

\noindent
The procedure is very simple as follows,
\begin{itemize}
\item Given a sample set of edges $\hat K$, for each edge $e \in \hat{K}$
\begin{itemize}
\item find the set of all triangles ($T_e$) incident to $e$ 
\item for each pair $(\tau,\tau')$, where $\tau,\tau' \in T_e$. Compute the $Cov(\tau,\tau')$ such that $P(\tau\cap\tau') = P(e)$
\end{itemize}
\end{itemize}
\noindent
Since, the computation of each edge is independent of other edges, we parallelize the computation of the variance estimators. Moreover, since the computation of triangle counts and paths of length two can themselves be parallelized, we compare the total elapsed time in seconds used to compute these counts on both the full graph and a sampled graph of size $\leq 40K$ edges. Table~\ref{tab:time} provide the results of this comparison for the three web graphs. Note that in the case of the sampled graph, we also sum the computations of the variance estimators in addition to the triangle and paths of length two count estimators. Also, note that we use the sample reported in Table~\ref{tab:res_40K}. The results show a significant reduction in the time needed to compute triangles and paths of length two counts. For example, consider the web-BerkStan graph, where the total time is reduced from 113 seconds to 1.05 seconds. Note that all the computations of Table~\ref{tab:time} are performed on a Macbook Pro laptop 2.9GHZ Intel Core i7 with 8GB memory. Note that the storage state of gSH is only in terms of the number of sampled edges. In others words, the storage of the sampling probabilities is negligible since it is not part of the in-memory consulting state of the stream sampling framework gSH. Moreover, we use only three different probabilities, ($p,q$ and $1$), that can be stored with a custom $2-bit$ data structure, where $00$, $01$, and $10$ represents $p,q$ and $1$ respectively. 
\section{Related Work}
\label{sec-related}

In this section, we discuss the related work on the problem of large-scale graph analytics and their applications. Generally speaking, there are two bodies of work related to this paper: (i) graph analytics in graph stream setting, and (ii) graph analytics in the non-streaming setting (e.g. using \textsc{MapReduce} and \textsc{Hadoop}). In this paper, we propose a generic stream sampling framework for big-graph analytics, called Graph Sample and Hold (gSH), that works in a \emph{single pass} over the streams. Therefore, we focus on the related work for graph analytics in graph stream setting and we briefly review the other related work.

\paragraph{\textbf{\textit{Graph Analysis Using Streaming Algorithms}}}

Before exploring the literature of graph stream analytics, we briefly review the literature in data stream analysis and mining that may not contain graph data. For example, for sequence sampling (\eg, reservoir sampling)~\cite{Vitter:85,babcock2002sampling}, for computing frequency counts~\cite{manku,charikar2002finding} and load shedding~\cite{tatbul2003load}, and for mining concept drifting data streams~\cite{fan2004streamminer}. Additionally, The idea of \emph{sample and hold} (\textsc{SH}) was introduced in~\cite{Estan2002} for unbiased sampling of network measurements with integral weights. 
Subsequently, other work explored adaptive \textsc{SH}, and \textsc{SH} with signed updates~\cite{cohen2012don,cohen2007algorithms}. 
Nevertheless, none of this work has considered the framework of \emph{sample and hold} (\textsc{SH}) for social and information networks. In this paper, however, we propose the framework of \emph{graph sample and hold} (gSH) for big-graph analytics. 

There has been an increasing interest in mining, analysis, and querying of massive graph streams as a result of the proliferation of graph data (\eg, social networks, emails, IP traffic, Twitter hashtags). Following the earliest work on graph streams~\cite{raghavan1999computing}, several types of problems were explored in the field of analytics of massive graph streams. 
For example, to count triangles~\cite{jha2013space,pavan2013counting,yossef,buriol2006counting,becchetti2008efficient,jowhari2005new}, finding common neighborhoods~\cite{buchsbaum2003finding}, estimating pagerank values~\cite{atish}, and characterizing degree sequences in multi-graph streams~\cite{cormode2005space}. In the data mining and machine learning field, there is the work done on clustering graph streams~\cite{aggarwal2010clustering}, outlier detection~\cite{aggarwal2011outlier}, searching for subgraph patterns~\cite{chen2010continuous}, mining dense structural patterns~\cite{aggarwal2010dense}, and querying the frequency of particular edges and subgraphs in the graphs streams~\cite{zhao2011gsketch}. For an excellent survey on analytics of massive graph streams, we refer the reader to~\cite{mcgregor2009graph,zhang2010survey}.

Much of this work has used various sampling schemes to sample from the stream of graph edges. Surprisingly, the majority of this work has focused primarily on sampling schemes that can be used to estimate certain graph properties (e.g. triangle counts), while much less is known for the case when we need a generic approach to estimate various graph properties with the same sampling scheme with minimum assumptions.

For example, the work done in~\cite{buriol2006counting} proposed an algorithm with space bound guarantees for triangle counting and clustering estimation in the \emph{incidence stream model} where all edges incident to a node arrive in order together. However, in the incidence stream model, counting triangles is a relatively easy problem, and counting the number of paths of length two is simply straightforward. On the other hand, it has been shown that these bounds and accurate estimates will no longer hold in the case of \emph{adjacency stream model}, where the edges arrive arbitrarily with no particular order~\cite{jha2013space,pavan2013counting}.

Another example, the work done Jha \etal in~\cite{jha2013space} proposed a practical, single pass, $O(\sqrt{n})$-space streaming algorithm specifically for triangle counting and clustering estimation with additive error guarantee (as opposed to other algorithms with relative error guarantee). Although, the algorithm is practical and approximates the triangle counts accurately at a sample size of $40K$ edges, their method is specifically designed for triangle counting. Nevertheless, we compare to the results of triangle counts reported in~\cite{jha2013space}, and we show that our framework is not only generic but also produces errors with orders of magnitude less than the algorithm in~\cite{jha2013space}, and with a small(er) storage overhead in many times.

More recently, Pavan \etal proposed a space-efficient streaming algorithm for counting and sampling triangles in~\cite{pavan2013counting}. This algorithm is practical and works in a single pass streaming fashion with order $O(m\Delta/T)$-space, where $\Delta$ is the maximum degree of the graph. However, this algorithm needs to store estimators (i.e., wedges that may form potential triangles), and each of these estimators stores \emph{at least} one edge. In their paper, the authors show that they need at least $128$ estimators (i.e., more than $128K$ edges), to obtain accurate results (large storage overhead compared to this paper). The sampling algorithm of~\cite{pavan2013counting} bears some formal resemblance to our approach in using different sampling probabilities 
depending on whether or not an arriving edge is adjacent to a previous edge, but otherwise the details are substantially different. 

Other semi-streaming algorithms were proposed for triangle counting, such as the work in~\cite{becchetti2008efficient}, however, they are not practical and produce large error as discussed in~\cite{pavan2013counting}.

Horvitz-Thompson estimation was proposed in the graphical setting by Frank~\cite{frank1978samplingsocialnetworks}, including applications to subgraph sampling, but limited to a model of simple random sampling of vertices without replacement; see also  Kolaczyk~\cite{kolaczyk2009statistical}.

\paragraph{\textbf{\textit{Graph Analysis Using Static and Parallel Algorithms}}}

We briefly review other research for graph analysis in non-streaming setting (i.e., static). For example, exact counting of triangles with runtime ($O(m^{3/2}$)~\cite{schank2007algorithmic}, or approximately by sampling edges as in~\cite{tsourakakis2009doulion}. Although not working in a streaming fashion, the algorithm in~\cite{tsourakakis2009doulion} uses unbiased estimators of triangle counts similar to our work. Moreover, other algorithms were proposed based on wedge sampling and proved to be accurate in practice, such as the work in~\cite{seshadhri2013fast,seshadhri2013wedge,kolda2013counting}. More recently, the work done in~\cite{rossi2014pmc} proposed a parallel framework for finding the maximum clique.

Finally, there has been an increasing interest in the general problem of network sampling. For example, to obtain a representative subgraph~\cite{leskovec2006slg,ahmed2013network}, to preserve the community structure~\cite{Maiya2010www,Maiya2011kdd}, to perform A/B testing of social features~\cite{backstrom2011network}, and many other interesting work~\cite{dasgupta2012social,al2009output,vattani2011preserving}.

\section{Conclusion}
\label{sec-conclusion}
In this paper, we presented a generic framework for big-graph analytics called graph sample and hold (gSH). The gSH framework samples from massive graphs \emph{sequentially in a single pass}, one edge at a time, while maintaining a small state typically less than $1$\% of the total number of edges in the graph. Our contributions can be summarized in the following points:
\begin{itemize}
\item gSH works sequentially in a single pass, while maintaining a small state.
\vspace{-2.mm}
\item We show how to produce unbiased estimators and their variance for four specific graph quantities of interest to estimate within the framework. Further, we show how to obtain confidence bounds using the variance unbiased estimators.
\vspace{-2.mm}
\item We conducted several experiments on real world graphs, such as social Facebook graphs, and web graphs. The results show that the relative error goes from $0.02$\% to $0.95$\% for a sample with $\leq 40K$ edges, across different types of graphs. Moreover, the results show that the sampling distribution is centered and balanced over the actual values of the four graph quantities of interest, with tightening error bounds as the sample size increases.
\vspace{-2.mm}
\item We discuss the effect of parameter choice $p,q$ on the proportion of sampled edges.
\vspace{-2.mm}
\item We compare to the state of the art~\cite{jha2013space}, and our proposed framework has a relative error \emph{orders of magnitude} less than the Streaming-Triangles algorithm proposed in ~\cite{jha2013space}, as well as with a small(er) overhead storage (in most of the graphs). We note that the work in~\cite{jha2013space} compares to other state of the art algorithms and shows that they are not practical and produce a very large error; see Section~\ref{sec-related} for more details.
\vspace{-2.mm}
\item We show how to parallelize and efficiently compute the unbiased variance estimators, and we discuss the significant reductions in computation time that can be achieved by gSH framework.    
\vspace{-2.mm}
\end{itemize}  

In future work, we aim to extend gSH to other graph properties, such as cliques, coloring number, and size of connected components, among many others. 


\bibliographystyle{acm}
{
\bibliography{paper}
}

\end{document}